\newtheorem{df}{Definition}[section]
\newtheorem{lm}[df]{Lemma}
\newtheorem{ex}[df]{Example}
\newtheorem{theo}[df]{Theorem}
\newtheorem{cor}[df]{Corollary}
\newtheorem{ob}[df]{Observation}
\newcommand{\eat}[1]{}
\newcommand{\Val}{\mathrm{val}}
\newcommand{\supp}{supp}
\newcommand{\im}{im}
\newcommand{\PDA}{\mathrm{PDA}}
\newcommand{\uPDA}{\mathrm{uPDA}}
\newcommand{\CF}{\mathrm{CF}}
\newcommand{\uCF}{\mathrm{uCF}}
\newcommand{\fl}{\langle \! \langle}
\newcommand{\wt}{\mathrm{wt}}
\newcommand{\lab}{\mathrm{lab}}\newcommand{\fr}{\rangle \! \rangle}
\newcommand{\one}{\mathds{1}}
\newcommand{\G}{{\cal G}}
\newcommand{\M}{{\cal M}}
\newcommand{\D}{\mathrm{D}}
\newcommand{\HH}{{\cal H}}
\newcommand{\beh}{|\!|}
\newcommand{\nat}{\mathbb{N}}
\begin{document}
\begin{center}

{\LARGE The Chomsky-Sch\"utzenberger Theorem\\[3mm] for Quantitative Context-Free Languages}

\

\

\begin{tabular}{cc}
{\large Manfred Droste} & {\large Heiko Vogler}\\[1mm]
{\small {\it Institute of Computer Science}}
& {\small {\it Department of Computer Science}}\\[1mm]
{\small {\it Leipzig University}} & {\small {\it Technische
    Universit\"at Dresden}}\\[1mm]
{\small {\it D-04109 Leipzig, Germany}} & {\small {\it D-01062 Dresden, Germany}} \\[1mm] 
{\small \tt droste@informatik.uni-leipzig.de} &
{\small \tt Heiko.Vogler@tu-dresden.de}
\end{tabular}

\

\today

\end{center}

\hyphenation{semi-rings}

\begin{abstract}
Weighted automata model quantitative aspects of systems like the
consumption of resources during executions. Traditionally, the weights
are assumed to form the algebraic structure of a semiring, but
recently also other weight computations like average have been
considered. Here, we investigate quantitative context-free languages
over very general weight structures incorporating all semirings,
average computations, lattices. In our main result, we derive the Chomsky-Sch\"utzenberger Theorem for such quantitative context-free languages, showing that each arises as the image of the intersection of a Dyck language and a recognizable language under a suitable morphism. Moreover, we show that quantitative context-free languages are expressively equivalent to a model of weighted pushdown automata. This generalizes results previously known only for semi\-rings. We also investigate under which conditions quantitative context-free languages assume only finitely many values.
\end{abstract}

\noindent {\bf keywords} weighted context-free grammars, weighted pushdown automata, valuation
  monoids, context-free step functions.

\section{Introduction}	

The Chomsky-Sch\"utzenberger Theorem forms a famous cornerstone in the
theory of context-free languages \cite{chosch63} relating arbitrary
context-free languages to Dyck languages and recognizable languages. A
weighted version of this result reflecting the degrees of ambiguity
was also already given in \cite{chosch63}. For weights taken in
commutative semirings this result was presented in \cite{salsoi78}. For surveys on this we refer the reader to \cite{autberboa97,petsal09}.

Recently, in \cite{chadoyhen08,chadoyhen09} new models of quantitative automata for technical systems have been investigated describing, e.g., the average consumption of resources. In \cite{chavel12} pushdown automata with mean-payoff cost functions were considered which comprize a quantitative modelling of sequential programs with recursion.  These cost functions cannot be computed in semirings. Automata over the general algebraic structure of valuation monoids  were investigated in \cite{dromei10,dromei11}. In valuation monoids, each sequence of weights gets assigned a single weight; examples include products in semirings as well as average computations on reals. Hence automata over valuation monoids include both semiring-weighted automata and quantitative automata. 

It is the goal of this paper to investigate weighted context-free grammars over valuation monoids. Hence we may associate to each derivation, for instance, the average of the costs of the involved productions. We could also associate with each production its  degree of sharpness or truth, as in multi-valued logics, using bounded lattices as valuation monoids. Thereby, we can associate to each word over the underlying alphabet $\Sigma$ such a value (real number, element of a lattice, etc.) indicating its total cost or degree of truth, and any function from $\Sigma^*$  into the value set is called a quantitative language or series. Note that by the usual identification of sets with $\{0,1\}\,$-valued functions, classical languages arise as particular quantitative languages.

\sloppy Now we give a summary of our results. We prove the equivalence of weighted context-free grammars and weighted pushdown automata  over arbitrary valuation monoids (cf. Theorem~\ref{th:PDA=CF}).   In our main result we derive a weighted version of the Chomsky-Sch\"utzenberger Theorem over arbitrary valuation monoids (cf. Theorem~\ref{th:char}). In particular, we show that any  quantitative context-free language arises as the image of the intersection of a  Dyck language and a recognizable language under a suitable weighted morphism, and also as the image of a Dyck language and a recognizable series under a free monoid morphism. Conversely, each quantitative language arising  as such an image is a quantitative context-free  language. This shows that the weighted Chomsky-Sch\"utzenberger  Theorem holds for much more general weighted structures than commutative semirings,  in particular, neither associativity, nor commutativity, nor  distributivity of the multiplication are needed.  In our proofs, due to the lack of the above properties, we cannot use the theory of semiring-weighted automata (cf. \cite{kuisal86,salsoi78}); instead we employ explicit constructions of weighted automata taking care of precise calculations of the weights to deduce our results from the classical, unweighted Chomsky-Sch\"utzenberger Theorem. 
The latter is contained in the weighted result by considering the Boolean semiring $\{0,1\}$.

Finally we consider series which assume only finitely many values.  Such series were important in the context of recognizable series and weighted MSO logic (cf. \cite{berreu88,drogas07}). In contrast to the setting of recognizable series, even over the semiring of natural numbers, characteristic series of inherently ambiguous context-free languages are not context-free (cf. Lemma~\ref{ex:inh-amb}). Here we give sufficient conditions for obtaining context-freeness. Conversely, we show that under suitable local finiteness conditions on the valuation monoid, each quantitative context-free language assumes just finitely many values, each on a context-free language (cf. Theorem~\ref{lm:WPDA-cfstep}). These results seem to be new even for finite semirings. As a consequence, weighted context-free languages over bounded lattices are context-free step functions.

The robustness of automata over valuation monoids is witnessed by the fact  that two other fundamental results of formal language theory, the Kleene and B\"uchi theorems for recognizable languages, were shown to hold not only for semiring-weighted automata \cite{sch61,drogas07}, but also for quantitative automata over valuation monoids \cite{dromei11,mei11,dromei10}.
This raises the question which further results from the theory of semiring-weighted automata could be extended to more general quantitative automata settings including calculations of averages.

This paper combines the papers \cite{drovog13} and \cite{drovog14}
and supplements them by a few additional examples and
more detailed proofs.

\section{Valuation Monoids and Series}
\label{sec:monoids-series}

We define a \emph{unital valuation monoid} to be a tuple $(K,+,\Val,0,1)$ such that 
(i)~$(K,+,0)$ is a commutative monoid,
(ii) $\Val: K^* \rightarrow K$ is a mapping such that $\Val(a) = a$
for each $a \in K$, 
(iii) $\Val(a_1,\ldots, a_n) = 0$ whenever $a_i =0$ for some $1 \le i \le n$, and 
(iv) $\Val(a_1,\ldots, a_{i-1},1,a_{i+1},\ldots, a_n) = \Val(a_1,\ldots, a_{i-1}
,a_{i+1},\ldots, a_n)$ for any $1 \le i \le n$, and (v) $\Val(\varepsilon) = 1$.

Note that, similarly to products where the element 1 is neutral and can be left out, $\Val$ can be considered as a very general product operation in which the unit 1 is neutral as reflected by requirements (iv) and (v). The concept of {\em valuation monoid} was introduced in \cite{dromei10,dromei11} as a structure $(K,+,\Val,0)$ with a mapping  $\Val:  K^+ \rightarrow K$  satisfying requirements (i)-(iii) correspondingly. In \cite{dromei10,dromei11,mei11}, also many examples of valuation monoids were given. For this paper, it will be important that the valuation monoids contain a unit 1. We show that this means no restriction of generality.

\begin{ex}\label{ex:val}\rm

1. \sloppy Let $(K,+,\Val,0)$ be a valuation monoid and let $1 \not\in K$. We put $K' = K \cup \{1\}$ and define $(K',+',\Val',0,1)$ such that $+'$ extends $+$, $x +' 1 = 1 +' x = 1$ for each $x \in K'$, $\Val(\varepsilon) =1$, and $\Val'(a_1,\ldots, a_n) = \Val(b_1,\ldots,b_m)$ where $b_1\ldots b_m$ is the subsequence of $a_1,\ldots, a_n$ excluding $1$'s. Then $(K',+',\Val',0,1)$ is a unital valuation monoid.

2. \sloppy The structure $(\mathbb{R} \cup \{-\infty\}, \sup, \mathrm{avg}, -\infty)$ with $\mathrm{avg}(a_1,\ldots,a_n) = \frac{1}{n} \cdot \sum_{i=1}^n a_i$ is a valuation monoid (with the usual laws for $-\infty$). Applying the procedure of Example 1 to it, we could add $\infty$ as the unit 1, disregarding $\infty$ when 
calculating averages. This leads to a unital valuation monoid $(\mathbb{R} \cup \{-\infty,\infty\}, \sup, \mathrm{avg}, -\infty,\infty)$.

3.  Let $(K,+,\Val,0)$ be a valuation monoid. Note that in Example 1, the unit 1 satisfies $1+'1= 1$. Here we wish to give another extension of $K$ to a unital valuation monoid $K'$ which does not satisfy this law for $1$. Let $K' =  \mathbb{N} \times K$ with componentwise addition $+'$. Given $(m_1,x_1), \ldots, (m_n,x_n)$  we define $\Val'((m_1,x_1), \ldots, (m_n,x_n)) = \Val(x_1, \ldots, x_1, \ldots , x_n\ldots, x_n)$ where this sequence contains $m_i$ copies of $x_i$, if $m_i \not= 0$ and $x_i \not=0$ (for $1 \le i \le n$), otherwise the $x_i$'s are excluded from the sequence to which $\Val$ is applied.
Then $(K',+',\Val',(0,0),(1,0))$ is a unital valuation monoid. 

4. Next we introduce unital valuation monoids where the valuation arises from `local', binary operations as follows.  First, we define a \emph{unital monoid-magma} to be a tuple $(K,+,\cdot,0,1)$ such that $(K,+,0)$ is a commutative monoid, $\cdot:K\times K\rightarrow K$ is a binary operation, and $a \cdot 0 = 0 \cdot a = 0$ and $a \cdot 1 = 1 \cdot a = a$ for every $a \in K$.  Then we can consider each unital monoid-magna $(K,+,\cdot,0,1)$ as the particular unital valuation monoid $(K,+,\Val,0,1)$  where $\Val(a_1,\ldots, a_n) = (\ldots((a_1 \cdot a_2)\cdot a_3)\cdot \ldots )\cdot a_n$. These structures will be important for us in Section \ref{sec:step-functions}.
Now we list some examples of (classes of) unital monoid-magmas which can be viewed as unital valuation monoids in this way:
\begin{itemize} 
\item A unital monoid-magma is a \emph{strong bimonoid}, if the multiplication is associative, and a \emph{semiring}, if the multiplication is associative and distributive (from both sides) over addition. For a range of examples of strong bimonoids which are not semirings we refer the reader to  \cite{drostuvog10}. 

\item The Boolean semiring $\mathbb{B}=(\{0,1\},\lor,\land,0,1)$ allows us to give exact translations between unweighted and $\mathbb{B}$-weighted settings. The semiring $(\mathbb{N},+,\cdot,0,1)$ of natural numbers permits counting.

\item Each bounded lattice $(L,\lor,\land,0,1)$ (i.e. $0\le x\le 1$ for each $x\in L$) is a strong bimonoid. There is a wealth of lattices \cite{bir67,gra03} which are not distributive, hence strong bimonoids but not semirings.

\item 
Let $(S,+,\cdot,0,1)$ be a strong bimonoid and let $n\ge 2$. Then the set of all $(n\times n)$-matrices over $S$ together with the usual pointwise addition and the usual multiplication of matrices forms a unital monoid-magma. We note that this matrix multiplication is not associative if the multiplication of the strong bimonoid is not distributive. We just note that such matrices arise naturally when considering the initial algebra semantics of weighted automata over strong bimonoids, cf. \cite[p. 159]{drostuvog10}.

\item The structure $(\overline{\mathbb{R}},\sup,\mathrm{avg}_2,-\infty,\infty)$ with $\overline{\mathbb{R}}=\mathbb{R}\cup\{-\infty,\infty\}$, and $\mathrm{avg}_2(a,b)=\frac{a+b}{2}$ if $a,b<\infty$, $\mathrm{avg}_2(a,\infty)=\mathrm{avg}_2(\infty,a)=a$ ($a,b\in\overline{\mathbb{R}}$) is a unital monoid-magma, but not a strong bimonoid.

\end{itemize}
\end{ex}

The importance of infinitary sum operations was observed early on in weighted automata theory, cf. \cite{eil74}. In our context, they will arise for ambiguous context-free grammars if a given word has infinitely many derivations.

A monoid $(K,+,0)$ is {\em complete} \cite{eil74} if it has an infinitary
sum operation $\sum_I:K^I \rightarrow K$ for any index set
$I$  such that  
$\sum_{i \in \emptyset} a_i = 0$, $\sum_{i \in \{k\}}
a_i = a_k$,  $\sum_{i \in \{j, k\}} a_i = a_j + a_k$ for $j\not= k$, 
and $\sum_{j\in J} \left( \sum_{i \in I_j} a_i\right) = \sum_{i\in I}
a_i$ if $\bigcup_{j \in J} I_j = I$ and  $I_j \cap I_k = \emptyset$
for $j \not= k$. 

A monoid $(K,+,0)$ is {\em idempotent} if $a+a=a$ for each $a \in K$, and a complete monoid is {\em completely idempotent} if $\sum_{I} a =
a$ for each $a \in K$ and any index set $I$.  

We call a unital valuation monoid $(K,+,\Val,0,1)$ complete,
idempotent, or completely idempotent if $(K,+,0)$ has the respective
property.

\begin{ex}
\rm 1. The Boolean semiring $\mathbb{B}$ 
  and the tropical semiring $(\mathbb{N}\cup \{\infty\}, \min,+,\infty,0)$ 
are complete and completely idempotent. For a wealth of further examples of complete semirings see
\cite[Ch.22]{gol99}.

2. The unital valuation monoid $(\mathbb{R}\cup \{-\infty,\infty\},
\sup,\mathrm{avg},-\infty,\infty)$ (cf. Example \ref{ex:val}(2)) is complete
and completely idempotent.

3. Consider the commutative monoid $(\{0,1,\infty\}, +,0)$  with
$1 +1 = 1$, $1 + \infty = \infty + \infty = \infty$, and $\sum_{I}1 =
\infty$ for any infinite index set $I$ and corresponding natural laws
for infinite sums involving the other elements. This monoid is complete
and idempotent, but not completely idempotent.
\end{ex}

Let $\Sigma$ be an alphabet and $K$ a unital valuation
monoid. A \emph{series} or \emph{quantitative language} over $\Sigma$ and $K$ is a mapping $s:
\Sigma^* \rightarrow K$. As usual, we denote $s(w)$ by $(s,w)$.  
The \emph{support} of $s$ is the set $\supp(s) = \{w \in \Sigma^* \mid (s,w) \not= 0\}$.  The class of all series over $\Sigma$ and $K$ is denoted by $K\fl \Sigma^*\fr$.

Let $s,s' \in K\fl \Sigma^*\fr$ be series.  We define the sum $s+s'$  by letting $(s+ s',w) = (s,w) + (s',w)$ for each $w \in \Sigma^*$. 
A family of series $(s_i \mid i \in I)$ is \emph{locally finite} if
for each $w \in \Sigma^*$ the set $I_w = \{i \in I \mid (s_i,w) \not=
0\}$ is finite. In this case or if $K$ is complete, we define $\sum_{i \in I} s_i \in K\fl \Sigma^*\fr$ by letting $(\sum_{i \in I} s_i, w) = \sum_{i \in I_w} (s_i,w)$ for every $w \in \Sigma^*$. 
For $L \subseteq \Sigma^*$, we define the {\it characteristic series}
$\one_{L} \in K\fl \Sigma^*\fr$ by $(\one_L,w) = 1$ if $w \in L$, and
$(\one_L,w) = 0$ otherwise  for $w \in \Sigma^*$.

\begin{quote}
{\em In the rest of this paper, let $(K,+,\Val,0,1)$ denote an arbitrary unital valuation monoid, unless specified otherwise.
}
\end{quote}

\section{Weighted Context-Free Grammars}

In this section, we introduce our notion of weighted context-free grammars and we present basic properties. 
A \emph{context-free grammar} (CFG) is a tuple
$\G = (N,\Sigma,Z,P)$ where $N$ is a finite set (\emph{nonterminals}),
$\Sigma$ is an alphabet with $N\cap \Sigma =\emptyset$ (\emph{terminals}),
$Z \in N$ (\emph{initial nonterminal}), and $P \subseteq N \times
(N \cup \Sigma)^*$ is a finite set (\emph{productions}).

For every production $\rho = (A \rightarrow \xi) \in P$ we define the
binary relation $\stackrel{\rho}{\Rightarrow}$ on  $(N \cup \Sigma)^*$ such that for every $w \in \Sigma^*$ and $\zeta \in (N \cup \Sigma)^*$, we have $w\, A \, \zeta \stackrel{\rho}{\Rightarrow} w \, \xi \, \zeta$.
A \emph{(leftmost) derivation} of $\G$ is a sequence $d = \rho_1\ldots\rho_n$ of
productions $\rho_i \in P$ such that there are sentential forms
$\xi_0,\ldots,\xi_n$ with $\xi_{i-1} \stackrel{\rho_i}{\Rightarrow}
\xi_i$ for every $1 \le i \le n$. We abbreviate this derivation by $\xi_0  \stackrel{d}{\Rightarrow} \xi_n$.
Let $A\in N$ and  $w \in \Sigma^*$. An \emph{$A$-derivation
  of $w$} is a derivation $d$ such that $A \stackrel{d}{\Rightarrow}
w$. We let $D(A,w)$ denote the set of all $A$-derivations
of $w$. And we let $D(w)$ denote the set  $D(Z,w)$ of all {\em derivations of $w$}. The \emph{language generated by $\G$} is the set 
$L(\G) = \{w \in \Sigma^* \mid D(w) \not= \emptyset\}$. 

We say that $\G$ is {\em ambiguous} if there is a $w \in L(\G)$ such that
$|D(w)| \ge 2$; otherwise $\G$ is {\em unambiguous}. A context-free
language $L$ is {\em inherently ambiguous} if every context-freeG $\G$ with $L =
L(\G)$ is ambiguous.

Next let $K$ be a unital valuation monoid. A \emph{context-free grammar with weights in $K$} is a tuple $\G = (N,\Sigma,Z,P,\wt)$ where $(N,\Sigma,Z,P)$ is a CFG and $\wt\colon P \rightarrow K$ is a mapping (\emph{weight assignment}). We say that ${\cal G}$ is unambiguous if the underlying CFG is unambiguous.

The \emph{weight} of a derivation $d = \rho_1\ldots \rho_n$ is the element in $K$ defined by
\[
\wt(d) = \Val( \wt(\rho_1), \ldots, \wt(\rho_n))\enspace.\]

We say that $\G$ is a \emph{weighted context-free grammar} (WCFG) if
(i) $\{ d \in D(w) \mid \wt(d) \not=0\}$ is  finite for every $w \in \Sigma^*$ or (ii) $K$ is complete. In this case we define the  \emph{quantitative language} of $\G$ to be the series $\beh \G \beh \in K\fl \Sigma^*\fr$ given for every $w \in \Sigma^*$ by  
\[
(\beh \G \beh,w)  = \sum_{d \in D(w)} \wt(d)\enspace.
\]
Note that this sum exists by our assumptions on a WCFG.
A series $s  \in K\fl \Sigma^*\fr$ is a \emph{quantitative
  context-free language} if there is a WCFG $\G$ such that $s = \beh
\G \beh$. The class of all quantitative context-free languages over
$\Sigma$ and  $K$ is denoted by $\CF(\Sigma,K)$. Moreover, we let
$\uCF(\Sigma,K)$ comprise all series $\beh \G \beh$ where $\G$ is an
unambiguous WCFG. We say that two WCFG are \emph{equivalent}, if they have the same quantitative language.

Clearly, any CFG $\G$ can be transformed into a WCFG over the Boolean
semiring $\mathbb{B}$ by adding the weight assignment $\wt\colon P
\rightarrow \mathbb{B}$ such that $\wt(\rho) = 1$ for each $\rho
\in P$. Then for each $w \in \Sigma^*$ we have $w \in L(\G)$ if and
only if $(\beh \G \beh,w) = 1$, i.e., $\beh \G\beh =
\one_{L(\G)}$.  Consequently, a language $L$ is context-free if and
only if $\one_L \in \CF(\Sigma,\mathbb{B})$. 
 This shows that WCFG form a generalization of CFG.

We say that a CFG $\G=(N,\Sigma,Z,P,\wt)$ with weights is
{\em proper} if the right-hand side of each rule is an element of $(N
\cup \Sigma)^+ \setminus N$ (cf. \cite[p.302]{kuisal86}), i.e., $\G$ contains neither chain 
productions $A\rightarrow B$ nor $\varepsilon$-productions $A\rightarrow \varepsilon$. Then obviously the set  $D(w)$ is finite for every $w \in
\Sigma^*$. 

\begin{ob} A proper CFG with weights is a WCFG.
\end{ob}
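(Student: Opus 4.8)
The statement to prove is:

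\begin{ob} A proper CFG with weights is a WCFG.
\end{ob}

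The plan is straightforward. Recall that $\G = (N,\Sigma,Z,P,\wt)$ is a WCFG precisely when either $D(w)$ is finite for every $w \in \Sigma^*$, or $K$ is complete. Since we are given an arbitrary unital valuation monoid $K$ (not necessarily complete), the only route available is to show that properness forces $D(w)$ to be finite for every $w$. In fact, the excerpt has already asserted this consequence in the sentence preceding the observation (``Then obviously the set $D(w)$ is finite''), so the real content of the proof is to justify that assertion carefully.

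First I would fix a word $w \in \Sigma^*$ and any derivation $d = \rho_1 \ldots \rho_n \in D(w)$, so that $Z \stackrel{d}{\Rightarrow} w$. The key observation is that each application of a production strictly increases the length of the sentential form: since $\G$ is proper, every right-hand side $\xi$ of a production $A \rightarrow \xi$ satisfies $\xi \in (N \cup \Sigma)^+ \setminus N$, meaning $|\xi| \ge 1$ and $\xi$ is never a single nonterminal. Because $\G$ has no $\varepsilon$-productions, $|\xi| \ge 1$; and because it has no chain productions, whenever $|\xi| = 1$ the single symbol of $\xi$ must be terminal. In a leftmost derivation step $w' A \zeta \stackrel{\rho}{\Rightarrow} w' \xi \zeta$, the sentential form's length changes by $|\xi| - 1 \ge 0$, and it is strictly positive unless the step replaces $A$ by a single terminal. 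I would thus track a suitable monotone quantity along the derivation.

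The cleanest invariant to carry the argument is the pair consisting of the number of terminal symbols already produced and the total length of the sentential form; more directly, I would bound the length of the derivation $n$ in terms of $|w|$. Since the final sentential form is $w$ with $|w|$ symbols and the initial one is $Z$ with one symbol, and each step changes the length by $|\xi_i| - 1 \ge 0$, the length is nondecreasing, so already $n$ steps cannot shrink the form back. To get a genuine finite bound I would argue that each production either increases the sentential-form length by at least one (when $|\xi_i| \ge 2$) or leaves it unchanged while converting a nonterminal into a terminal (when $|\xi_i| = 1$, forcing a terminal). The number of length-increasing steps is at most $|w| - 1$ (the form grows from length $1$ to length $|w|$), and the number of terminal-producing steps is exactly $|w|$; hence $n \le 2|w| - 1$. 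Therefore every derivation of $w$ has length at most $2|w| - 1$, and since $P$ is finite there are only finitely many sequences of productions of bounded length, so $D(w)$ is finite.

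The argument is essentially routine, and the only point requiring care is the bookkeeping that separates the two kinds of steps so as to bound $n$; one must be sure that the ``no chain production'' hypothesis is exactly what rules out arbitrarily long sequences of length-preserving steps (such sequences would otherwise be possible via cycles $A \rightarrow B \rightarrow A \rightarrow \cdots$). Once $D(w)$ is finite for every $w$, the defining condition of a WCFG is met by its first clause, and hence $\beh \G \beh$ is well defined, completing the proof.
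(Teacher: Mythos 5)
Your argument is correct and is exactly the routine bookkeeping that the paper omits (it dismisses the finiteness of $D(w)$ as ``obvious'' and gives no proof for the observation): properness forces every derivation of $w$ to have length at most $2|w|-1$, and finiteness of $P$ then bounds $|D(w)|$. One small imprecision: the number of length-preserving steps $A\rightarrow\sigma$ is only \emph{at most} $|w|$, not exactly $|w|$, since length-increasing productions may also emit terminal symbols; this does not affect the bound or the conclusion.
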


A WCFG $\G$ is in \emph{head normal form} if every production has the form $A \rightarrow xB_1\ldots B_k$ 
where $x \in \Sigma \cup \{\varepsilon\}$, $k \ge 0$,  and
$A,B_1,\ldots,B_k \in N$. By a standard construction we now obtain the following.

\begin{lm}\label{lm:WCFG:normal-form} For every (unambiguous) WCFG there is an
  equivalent (unambiguous) WCFG in head normal form.
\end{lm}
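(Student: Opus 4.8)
The plan is to convert an arbitrary WCFG $\G = (N,\Sigma,Z,P,\wt)$ into head normal form by a sequence of weight-preserving local transformations, taking care that the unit $1$ of $K$ lets us absorb the auxiliary productions without changing any derivation weight. The target shape $A \to x B_1 \ldots B_k$ with $x \in \Sigma \cup \{\varepsilon\}$ differs from a general production $A \to \xi$ in two respects: terminals may occur anywhere inside $\xi$ rather than only as a single leading symbol, and a production may have several terminals or none at all with a leading nonterminal. First I would deal with interior terminals: for each terminal $a \in \Sigma$ introduce a fresh nonterminal $T_a$ together with the production $T_a \to a$, give it weight $\wt(T_a \to a) = 1$, and replace every occurrence of $a$ in the right-hand side of every original production (except, if convenient, a single leading occurrence) by $T_a$. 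Because $\Val$ ignores entries equal to $1$ (requirement~4 of a unital valuation monoid), splicing these weight-$1$ productions into a derivation leaves $\wt(d)$ unchanged, so the new grammar is equivalent to the old one.

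After this step every production has the form $A \to x \eta$ with $x \in \Sigma \cup \{\varepsilon\}$ a possible leading terminal and $\eta \in N^*$ a string of nonterminals; the only remaining defect is that $\eta$ might be preceded by $x$ and yet the production already matches head normal form, or $x = \varepsilon$ with $\eta$ a string of nonterminals, which is already of the required shape $A \to B_1 \ldots B_k$. So in fact the transformation above, applied so that at most the first symbol of each right-hand side is left as a bare terminal, already produces productions of the form $A \to x B_1 \ldots B_k$. I would verify the boundary cases: a production whose right-hand side is a single terminal $a$ becomes $A \to a$ (here $k=0$, $x=a$); a production $A \to \varepsilon$ becomes $A \to \varepsilon$ (here $x=\varepsilon$, $k=0$); and a production already starting with a nonterminal is $A \to \varepsilon B_1 \ldots B_k$ in head normal form. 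Throughout, the bijection between derivations of $\G$ and derivations of the new grammar sends each derivation step using an original production to the corresponding step followed by the relevant $T_a \to a$ steps, and since these inserted steps carry weight $1$, the derivation weights agree by requirement~4 and the induced series is unchanged.

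Unambiguity is preserved because the inserted productions $T_a \to a$ are the unique way to derive $a$ from $T_a$, so the correspondence between $Z$-derivations of a word $w$ in the two grammars is a bijection; hence if $\G$ is unambiguous so is its head-normal-form version. The main obstacle, and the point I would treat most carefully, is the bookkeeping that makes this derivation correspondence a genuine bijection preserving $D(w)$ for every $w$ and commuting with the weight map under $\Val$; the substantive content is entirely the use of requirement~4, and once the correspondence is set up the weight equality is immediate. I would also note that the completeness-or-finiteness hypothesis defining a WCFG is inherited: the map on derivations is a bijection, so $D(w)$ stays finite when it was finite, and the summation $\sum_{d \in D(w)} \wt(d)$ is left invariant term by term.
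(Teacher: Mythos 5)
Your proposal is correct and follows essentially the same route as the paper: introduce a fresh nonterminal $T_a$ (the paper writes $A_\sigma$) with a weight-$1$ production $T_a\to a$ for each terminal, replace all non-leading terminal occurrences in right-hand sides, and invoke requirement~4 of unital valuation monoids to see that the inserted weight-$1$ steps do not change $\wt(d)$ under the induced bijection of derivations. The boundary cases, the preservation of unambiguity, and the preservation of the finiteness condition are all handled as in the paper's argument.
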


\begin{proof} Let $\G = (N,\Sigma,Z,P,\wt)$ be a WCFG. We construct the CFG with weights $\G' = (N',\Sigma,Z,P',\wt')$ such that 
\begin{itemize}
\item $N' = N \cup \{A_\sigma \mid \sigma \in \Sigma\}$,
\item $P'$ and $\wt'$ are determined as follows. 
\begin{itemize}
\item If $A \rightarrow \varepsilon$ is in $P$, then $A \rightarrow \varepsilon$ is in $P'$; moreover, $\wt'(A \rightarrow \varepsilon) = \wt(A \rightarrow \varepsilon)$.

\item If $A \rightarrow X\xi$ in $P$ with $X \in N\cup \Sigma$ and $\xi \in (N\cup \Sigma)^*$, then $A \rightarrow X\xi'$ is  in $P'$ where $\xi'$ is obtained from $\xi$ by replacing every $\sigma \in \Sigma$ by $A_\sigma$; moreover, $\wt'(A \rightarrow X\xi') = \wt(A \rightarrow X\xi)$.

\item For every $\sigma \in \Sigma$, the production $A_\sigma \rightarrow \sigma$ is in $P'$; moreover, $\wt'(A_\sigma \rightarrow \sigma) = 1$.   
\end{itemize}
\end{itemize}
Then, for every $w \in \Sigma^*$, we have that  every derivation $d$ of $w$ by $\G$ corresponds naturally to a uniquely determined derivation of $w$ by  $\G'$, and vice versa.  Thus $\G'$ is also a WCFG and corresponding successful derivations have the same weight. Thus $\beh \G \beh = \beh \G'\beh$.  Clearly, $\G$ is unambiguous if and only if $\G'$ is unambiguous.
\end{proof}

\begin{ex}\rm \label{ex:wcfg} We consider the set of all arithmetic expressions over   addition, multiplication, and the variable $x$. Assuming that the calculation  of the addition (and multiplication) of two values needs $n \in \mathbb{N}$ (resp., $m \in \mathbb{N}$) machine  clock cycles, we
  might wish to know the average number of clock cycles the machine  needs to calculate any of the operations occurring in an expression. 

For this we consider the unital valuation monoid $(\mathbb{R} \cup \{-\infty, \infty\}, \sup, \mathrm{avg},$ $-\infty, \infty)$ as above and the WCFG $\G = (N,\Sigma,E,P,\wt)$ with the productions
\begin{center}
$\rho_1:  E \rightarrow  (E + E)$,
$\rho_2: E  \rightarrow  (E * E)$, 
$\rho_3: E \rightarrow  x$
\end{center}
and $\wt(\rho_1) = n$, $\wt(\rho_2) = m$, $\wt(\rho_3) = 1$.
For the expression $w = ((x * x) + (x* x))$, we have that $D(w) = \{d\}$ with $d= \rho_1 d' d'$ and  $d' = \rho_2 \rho_3 \rho_3$. In fact, $\G$ is unambiguous.  Then
$(\beh {\cal G} \beh, w) = \wt(d) = \Val(n, m,1,1, m,1,1)$ which is equal to $\mathrm{avg}(n,m,m) = \frac{n + 2\cdot m}{3}$.
\end{ex}


\section{Weighted Pushdown Automata}

In this section, we introduce our notion of weighted pushdown automata, and we derive a few basic properties. First let us fix our notation for pushdown automata. A \emph{pushdown automaton} (PDA) over $\Sigma$ is a tuple $\M = (Q,\Sigma,\Gamma,q_0,\gamma_0,F,T)$ where
$Q$ is a finite set (\emph{states}),
$\Sigma$ is an alphabet (\emph{input symbols}),
 $\Gamma$ is an alphabet (\emph{pushdown symbols}),
$q_0 \in Q$ (\emph{initial state}),
$\gamma_0 \in \Gamma$ (\emph{initial pushdown symbol}),
 $F \subseteq Q$ (\emph{final states}), and
$T \subseteq Q \times \big(\Sigma \cup \{\varepsilon\}\big) \times
\Gamma \times Q \times \Gamma^*$ is a finite set (\emph{transitions}).
For a transition $(q,x,\gamma,p,\pi)$, we call $q$, $x$, and $p$ its \emph{source state}, \emph{label}, and \emph{target state}, respectively.

For every transition $\tau = (q,x,\gamma,p,\pi) \in T$ we define 
the binary relation $\vdash^\tau$ on $Q \times \Sigma^*
\times \Gamma^*$ such that for every $w \in \Sigma^*$ and $\mu \in \Gamma^*$, we have 
$(q,xw,\gamma\mu) \vdash^\tau (p,w,\pi\mu)$.
A \emph{computation} is a sequence $\theta = \tau_1 \ldots \tau_n$ of transitions $\tau_i$ such that there are configurations $c_0,\ldots,c_n$ with  $c_{i-1} \vdash^{\tau_i} c_{i}$ for every $1 \le i \le n$. We abbreviate this computation by $c_0 \vdash^\theta c_n$.
The \emph{label} of a computation  $\tau_1 \ldots \tau_n$ is the
sequence of labels of the involved transitions. Let $w \in \Sigma^*$
and $q \in Q$. A \emph{$q$-computation on $w$} is a
computation $\theta$ such that $(q,w,\gamma_0)
\vdash^\theta (p,\varepsilon,\varepsilon)$ for some $p \in
F$. We let $\Theta(q,w)$ denote the set of all $q$-computations on $w$, and we let $\Theta(w) =\Theta(q_0,w)$. 
The \emph{language recognized by $\M$} is the set $L(\M) = \{ w \in \Sigma^* \mid \Theta(w) \not= \emptyset\}$. That means, we consider acceptance of words by final state and empty pushdown.

\begin{ob}\label{obs:decomp-PDA} 
Let $(q,v,\gamma_1 \ldots \gamma_k) \stackrel{\eta}{\vdash} (p,\varepsilon,\varepsilon)$ be a computation. Put $p_0 = q$. Then for $i=1,\ldots, k$ we obtain successively a uniquely determined shortest computation $\eta_i$, state $p_i\in Q$, and word $v_i\in\Sigma^*$ such that $(p_{i-1},v_i,\gamma_i) \stackrel{\eta_i}{\vdash}(p_i,\varepsilon,\varepsilon)$ and $\eta = \eta_1 \ldots \eta_k$,
 $v = v_1 \ldots v_k$,
 and $p_k = p$.
\end{ob}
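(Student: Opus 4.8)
The plan is to prove the claim by induction on the stack length $k$, relying on two structural facts about pushdown computations: that every transition inspects and rewrites only the topmost stack symbol, and that consequently the stack height can decrease by at most one in a single step (a transition replaces the top symbol $\gamma$ by some $\pi\in\Gamma^*$, so the height changes by $|\pi|-1\ge -1$). For the base case $k=0$ the stack is already empty, so no transition applies; hence $\eta$ is the empty computation, $v=\varepsilon$, and $p=q=p_0$, and the range $i=1,\ldots,k$ is empty, so there is nothing to decompose.

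For $k\ge 1$ I would write $\mu=\gamma_2\ldots\gamma_k$, so the initial configuration is $(q,v,\gamma_1\mu)$ with stack height $|\mu|+1=k$, while the final configuration has height $0$. First I would record the \emph{suffix-invariance} property: whenever $(q,a,\alpha)\stackrel{\vartheta}{\vdash}(p,a',\beta)$ is a computation and $\nu\in\Gamma^*$, then also $(q,a,\alpha\nu)\stackrel{\vartheta}{\vdash}(p,a',\beta\nu)$. This is immediate from the single-step definition $(q,xw,\gamma\nu)\stackrel{\tau}{\vdash}(p,w,\pi\nu)$, since $\tau$ only reads the top symbol. Since the height drops by at most one per step, a discrete intermediate-value argument shows that along $\eta$ the height must pass through the value $k-1$; let $\eta_1$ be the shortest prefix of $\eta$ after which the height equals $k-1$ for the first time.

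The core step is to show that after $\eta_1$ the stack content is exactly $\mu$ and that $\eta_1$ is, on its own, a computation of the required shape. Before the first-hitting moment the height is always $\ge k=|\mu|+1$, so no transition ever touches the bottom segment $\mu$; hence $\mu$ is preserved as a suffix of the stack throughout $\eta_1$, and since the height reaches $|\mu|$ for the first time exactly at the end of $\eta_1$, the content there is precisely $\mu$. Writing $v=v_1v''$ for the induced split of the consumed input and $p_1$ for the state reached, suffix-invariance (used in reverse, stripping the untouched $\mu$) turns $\eta_1$ into a genuine computation $(p_0,v_1,\gamma_1)\stackrel{\eta_1}{\vdash}(p_1,\varepsilon,\varepsilon)$. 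The remaining suffix $\eta'$ of $\eta$ then realises $(p_1,v'',\mu)\stackrel{\eta'}{\vdash}(p,\varepsilon,\varepsilon)$ with stack $\mu=\gamma_2\ldots\gamma_k$ of length $k-1$, to which the induction hypothesis supplies $\eta_2,\ldots,\eta_k$, states $p_2,\ldots,p_k=p$, and words $v_2,\ldots,v_k$ with $v''=v_2\ldots v_k$. Concatenating yields $\eta=\eta_1\ldots\eta_k$ and $v=v_1\ldots v_k$.

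For uniqueness I would argue that any valid decomposition must cut $\eta$ at exactly these points: in $(p_{i-1},v_i,\gamma_i)\stackrel{\eta_i}{\vdash}(p_i,\varepsilon,\varepsilon)$ the stack starts at height one, so the corresponding portion of the global computation is the initial segment that first empties the $i$-th symbol, which is forced once the earlier segments are fixed, and the minimality clause (``shortest $\eta_i$'') rules out any later cut. The main obstacle is the core step: making precise that the bottom segment $\mu$ stays untouched while the height exceeds $k-1$, and that the first-hitting prefix produces content exactly $\mu$ rather than merely some string of height $k-1$. This rests entirely on the height-decreases-by-at-most-one observation together with suffix-invariance, so I would isolate those two facts as small lemmas before running the induction.
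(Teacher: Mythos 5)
Your argument is correct. Note, however, that the paper offers no proof at all for this statement: it is labelled an Observation and treated as folklore about the stack discipline of pushdown automata, so there is nothing in the text to compare your write-up against step by step. What you supply is a complete formalization of exactly the fact the authors take for granted, and the two ingredients you isolate -- suffix-invariance of computations under an untouched bottom segment of the stack, and the fact that the stack height drops by at most one per transition (so a first-hitting-time/discrete intermediate-value argument applies) -- are precisely the right lemmas; together they give both the existence of the cut points and the uniqueness of the shortest $\eta_i$, since the bottom segment $\mu=\gamma_2\ldots\gamma_k$ ceases to be untouched immediately after the first hit. The only places where you should be slightly more explicit if you write this out in full are (i) that ``height $\ge k$ throughout $\eta_1$'' really does imply the stack always has the form $\alpha\mu$ with $\alpha\neq\varepsilon$ (a one-line induction along $\eta_1$, since each step rewrites only the top symbol of $\alpha$), and (ii) in the base case, that the empty sequence qualifies as a computation under the paper's definition (it does, with $n=0$). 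Neither is a gap, just a matter of polish.
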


Let $\M$ be any PDA. We say that $\M$ is {\em ambiguous} if there is a $w \in L(\M)$ such that $|\Theta(w)| \ge 2$; otherwise $\M$ is {\em unambiguous}.

Next let $K$ be a unital valuation monoid. A \emph{pushdown automaton with weights in $K$} is a tuple $\M = (Q,\Sigma,\Gamma,q_0,\gamma_0,F,T,\wt)$ where
$(Q,\Sigma,\Gamma,q_0,\gamma_0,F,T)$ is a PDA and  $\wt\colon T \rightarrow K$ is a mapping (\emph{weight assignment}). We say that ${\cal M}$ is unambiguous if the underlying PDA is unambiguous.

The \emph{weight} of a computation $\theta = \tau_1 \ldots \tau_n$ is the element in $K$ defined by
\[
\wt(\theta) = \Val( \wt(\tau_1), \ldots, \wt(\tau_n))\enspace.
\]

We say that $\M$ is a \emph{weighted pushdown automaton} (WPDA) if
(i) $\{\theta \in \Theta(w) \mid \wt(\theta) \not=0\}$ is finite for every $w \in \Sigma^*$ or (ii) $K$ is complete. 
 In this case we define the \emph{quantitative behavior} of $\M$ to be the series $\beh \M \beh \in K\fl \Sigma^* \fr$ given for every $w \in \Sigma^*$ by  
\[
(\beh \M\beh, w) = \sum_{\theta \in \Theta(w)} \wt(\theta)\enspace.
\] 
The class of  quantitative behaviors of all WPDA over $\Sigma$ and  $K$ is denoted by $\PDA(\Sigma,K)$.  Moreover, we let
$\uPDA(\Sigma,K)$ comprise all series $\beh \M \beh$ where $\M$ is an
unambiguous WPDA. We say that two WPDA are \emph{equivalent} if they have the same quantitative behavior.

Clearly, any PDA $\M$ can be transformed into a WPDA over the Boolean
semi\-ring $\mathbb{B}$ by adding the weight assignment $\wt\colon T
\rightarrow \mathbb{B}$ such that $\wt(\tau) = 1$ for each $\tau
\in T$. Then for each $w \in \Sigma^*$ we have $w \in L(\M)$ if and
only if $(\beh \M \beh,w) = 1$, i.e., $\beh \M\beh =
\one_{L(\M)}$.  Consequently, a language $L$ is recognized by a PDA if and
only if $\one_L \in \PDA(\Sigma,\mathbb{B})$. This shows that WPDA form a generalization of PDA.

We say that a PDA $\M
=(Q,\Gamma,q_0,\gamma_0,F,T,\wt)$ with weights is {\em proper} if
$(q,\varepsilon,\gamma,p,\pi) \in T$ implies $|\pi| \ge 2$ (cf. \cite[p.172]{kuisal86}), i.e., $\M$ extends its pushdown in each
$\varepsilon$-transition. Then obviously the set  $\Theta(w)$ is finite for every $w
\in \Sigma^*$.

\begin{ob} Each proper PDA with weights is a WPDA.
\end{ob}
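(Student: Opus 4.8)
The plan is to verify the one non-trivial requirement in the definition of a WPDA, namely that $\Theta(w)$ is finite for every $w \in \Sigma^*$; once this is established, $\M$ is a WPDA by definition, irrespective of whether $K$ is complete. Since the transition set $T$ is finite, it suffices to bound the \emph{length} $|\theta|$ of any accepting computation $\theta \in \Theta(w)$ by a quantity depending only on $|w|$: the number of computations of any fixed length $\ell$ is then at most $|T|^{\ell}$, and as only finitely many lengths will turn out to be admissible, this yields $|\Theta(w)| < \infty$.

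To obtain such a bound I would track the pushdown height along a computation $(q_0,w,\gamma_0) \stackrel{\theta}{\vdash} (p,\varepsilon,\varepsilon)$ with $\theta = \tau_1 \ldots \tau_n$. Writing $h_i$ for the length of the pushdown after the $i$-th step, we have $h_0 = 1$ (the pushdown is $\gamma_0$) and $h_n = 0$. A transition $(q,x,\gamma,p,\pi)$ replaces the top symbol $\gamma$ by $\pi$, so it changes the height by $|\pi|-1$. Here properness is decisive: every $\varepsilon$-transition satisfies $|\pi| \ge 2$ and hence strictly increases the height by at least $1$, whereas an input-reading transition changes the height by $|\pi|-1 \ge -1$.

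The counting step is then immediate. Exactly $|w|$ of the transitions read an input letter (these consume $w$ letter by letter), and the remaining $m = n - |w|$ are $\varepsilon$-transitions. Summing all height changes gives $h_n - h_0 = -1$, while the lower bounds above force $h_n - h_0 \ge m \cdot 1 + |w| \cdot (-1) = m - |w|$. Hence $m \le |w| - 1$, so $n = m + |w| \le 2|w| - 1$; that is, every accepting computation on $w$ has length at most $2|w|$, and for $w = \varepsilon$ the bound even shows $\Theta(\varepsilon) = \emptyset$. Combined with the finiteness of $T$, this bounds $|\Theta(w)|$ as required.

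I do not expect a genuine obstacle here: the entire content is the observation, already implicit in the properness condition, that each silent ($\varepsilon$) transition strictly grows the stack, so no computation can perform unboundedly many steps without consuming input. The only point needing a little care is the bookkeeping of pushdown heights and the degenerate case $w = \varepsilon$; the surrounding counting is routine.
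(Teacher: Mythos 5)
Your argument is correct: the height-counting shows every accepting computation on $w$ has length at most $2|w|-1$, which together with the finiteness of $T$ gives $|\Theta(w)|<\infty$. The paper states this observation without proof (it only remarks that finiteness of $\Theta(w)$ is ``obvious'' from properness), and your computation is precisely the standard justification it has in mind.
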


A WPDA  $\M = (Q,\Sigma,\Gamma,q_0,\gamma_0,F,T,\wt)$ is \emph{state
  normalized} if
\begin{itemize}
\item there is no transition in $T$ with $q_0$ as target state, 
\item  $F$ is a singleton, say, $F = \{q_f\}$, and 
\item there is no transition in $T$ with $q_f$ as source state.
\end{itemize}
By a standard construction we obtain the following.

\begin{lm}\label{lm:WPDA-normalized} For every (unambiguous) WPDA there is an
  equivalent state normalized (unambiguous)  WPDA. 
\end{lm}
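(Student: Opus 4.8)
The plan is to leave the interior of $\M$ completely untouched and to wrap it between a fresh initial state and a fresh final state, using a fresh bottom-of-stack marker to reconcile the two acceptance conditions (final state \emph{and} empty pushdown). Concretely, given the WPDA $\M = (Q,\Gamma,q_0,\gamma_0,F,T,\wt)$, I would introduce three new symbols $q_0'$, $q_f$ (states) and $\bot$ (a pushdown symbol, $\bot \notin \Gamma$), and build $\M' = (Q \cup \{q_0',q_f\}, \Gamma \cup \{\bot\}, q_0', \bot, \{q_f\}, T', \wt')$. The set $T'$ contains all old transitions of $T$ with their old weights, one start transition $\tau_s = (q_0', \varepsilon, \bot, q_0, \gamma_0\bot)$ of weight $1$, and, for every $p \in F$, a closing transition $\tau_p = (p, \varepsilon, \bot, q_f, \varepsilon)$ of weight $1$.

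The key observation driving correctness is that $\bot$ stays permanently at the bottom: $\tau_s$ is the only transition that ever writes $\bot$, and it writes exactly one copy, so on any computation of $\M'$ the marker $\bot$ is exposed on top of the stack \emph{iff} the portion above it is empty. Hence, from $(q_0', w, \bot)$ the automaton must first apply $\tau_s$, reaching $(q_0, w, \gamma_0\bot)$; it then runs an arbitrary computation of $\M$ above the marker (no old transition can touch $\bot$, since $\bot \notin \Gamma$); and it can enter $q_f$ only via some $\tau_p$, which fires precisely when the simulated $\M$-stack is empty and the current state $p$ is final, i.e. exactly when $\M$ accepts. Since $q_f$ has no outgoing transition and $\tau_p$ empties the stack, the computation then halts in an accepting configuration. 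This yields, for every $w$, a bijection $\theta = \tau_1\ldots\tau_n \mapsto \tau_s\,\tau_1\ldots\tau_n\,\tau_p$ between $\Theta_\M(w)$ and $\Theta_{\M'}(w)$ (where $p$ is the final state reached by $\theta$), and in particular $L(\M) = L(\M')$.

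Weight preservation is exactly where requirement~4 of a unital valuation monoid does the work. Along the bijection, $\wt'(\tau_s\,\tau_1\ldots\tau_n\,\tau_p) = \Val(1, \wt(\tau_1), \ldots, \wt(\tau_n), 1)$, and dropping the two neutral $1$'s gives $\Val(\wt(\tau_1),\ldots,\wt(\tau_n)) = \wt(\theta)$. Therefore $(\beh \M' \beh, w) = \sum_{\theta' \in \Theta_{\M'}(w)} \wt'(\theta') = \sum_{\theta \in \Theta_\M(w)} \wt(\theta) = (\beh \M \beh, w)$ for every $w$, so the two behaviors coincide. The same bijection shows $\Theta_{\M'}(w)$ is finite whenever $\Theta_\M(w)$ is (completeness of $K$ being inherited), so $\M'$ is genuinely a WPDA, and $|\Theta_{\M'}(w)| = |\Theta_\M(w)|$ shows $\M'$ is (un)ambiguous iff $\M$ is. State normalization holds by construction: $q_0'$ and $q_f$ are fresh, so no transition targets $q_0'$ and none leaves $q_f$, and $F' = \{q_f\}$ is a singleton.

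The only genuinely delicate point — the hard part — is reconciling the two acceptance conditions: a single final state with no outgoing transitions cannot itself perform the stack-emptying, so the bottom marker $\bot$ is indispensable, and one must verify that $\bot$ surfaces on top of the stack \emph{exactly} at the moment of acceptance and never prematurely. This is what guarantees that the closing transitions $\tau_p$ can fire only at legitimate accepting instants, and it is the precise reason the computation correspondence is a bijection rather than merely a surjection; all the remaining verifications are routine.
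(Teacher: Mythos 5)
Your construction is identical to the paper's, up to renaming ($\bot$ for the fresh bottom symbol $\gamma_0'$, $\tau_s$ for $\tau_{\mathrm{in}}$): wrap $\M$ between a fresh initial state pushing $\gamma_0$ above a fresh bottom marker and a fresh final state reached by weight-$1$ $\varepsilon$-transitions from each $p\in F$ that pop the marker, then use the bijection $\theta\mapsto\tau_s\,\theta\,\tau_p$ together with neutrality of $1$ under $\Val$. The argument is correct and matches the paper's proof; your explicit justification that $\bot$ surfaces exactly at acceptance is a welcome elaboration of a point the paper leaves implicit.
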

\begin{proof} \sloppy Let $\M = (Q,\Gamma,q_0,\gamma_0,F,T,\wt)$. We construct
  the PDA with weights $\M' = (Q',\Gamma',q_0',\gamma_0',\{q_f\},T',\wt')$ with 
\begin{itemize}
\item $Q' = Q \cup \{q_0',q_f\}$ with $Q \cap \{q_0',q_f\} = \emptyset$, 
\item $\Gamma' = \Gamma \cup \{\gamma_0'\}$,
\item $T' = T \cup
  \{\tau_{\mathrm{in}}\} \cup
  \{\tau_p \mid p \in F \}$,  $\tau_{\mathrm{in}} = (q_0',\varepsilon,\gamma_0',q_0,\gamma_0\gamma_0')$ and $\tau_p = (p,\varepsilon,\gamma_0',q_f,\varepsilon)$
  and 
\item $\wt'|_T = \wt$ and
  $\wt'(\tau_{\mathrm{in}}) =
  \wt(\tau_p) = 1$, for each $p\in F$. 
\end{itemize}
\sloppy Let $\theta = \tau_1 \tau_2 \ldots \tau_n$ be a computation of $\M$ on $w$ and let $p$ be the target state of $\tau_n$. Then $\theta' =
\tau_{\mathrm{in}} \tau_1 \tau_2 \ldots \tau_n \tau_p$ is a computation of $\M'$ on $w$ and $\wt'(\theta') = \wt(\theta)$. Vice versa, every computation of $\M'$ on $w$ has the form $\tau_{\mathrm{in}} \tau_1 \tau_2 \ldots \tau_n \tau_p$ for some $p \in F$. Then $\tau_1 \tau_2 \ldots \tau_n$ is a computation of $\M$ on $w$.

Moreover,  $\M'$ is a WPDA. Thus we have $\beh \M\beh = \beh \M'\beh$. Clearly,  $\M$ is unambiguous if and only if $\M'$ is unambiguous.
\end{proof}

Next we show that WPDA with just one state are as powerful as
arbitrary WPA using the classical triple construction (cf. e.g. \cite[Lecture 25]{koz97}).

\begin{lm}\label{lm:WPDA-one-state} For every (unambiguous) WPDA there is an equivalent (unambiguous) WPDA with just one state.
\end{lm}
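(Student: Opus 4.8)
The plan is to use the classical triple construction, which pushes the control-state information of $\M$ into the pushdown symbols so that a single state suffices. First I would invoke Lemma \ref{lm:WPDA-normalized} to assume that $\M = (Q,\Gamma,q_0,\gamma_0,F,T,\wt)$ is state normalized; in particular $F = \{q_f\}$ is a singleton, which is exactly what guarantees a well-defined single initial pushdown symbol for the new machine.

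I would then construct the single-state WPDA $\M' = (\{*\},\Gamma',*,\gamma_0',\{*\},T',\wt')$ with $\Gamma' = Q \times \Gamma \times Q$, writing a pushdown symbol as $[p,\gamma,q]$, and $\gamma_0' = [q_0,\gamma_0,q_f]$. The intended meaning of $[p,\gamma,q]$ on top of the stack is: ``$\M$ currently has $\gamma$ on top of its stack, is in state $p$, and the subcomputation that consumes $\gamma$ together with everything pushed above it must end in state $q$''. For each transition $\tau = (q,x,\gamma,p,\gamma_1\cdots\gamma_m) \in T$, each target guess $q' \in Q$, and each tuple of guessed boundary states $p_1,\ldots,p_{m-1} \in Q$, I put into $T'$ the transition
\[
(*,\,x,\,[q,\gamma,q'],\,*,\,[p,\gamma_1,p_1][p_1,\gamma_2,p_2]\cdots[p_{m-1},\gamma_m,q'])
\]
and set $\wt'$ of it to $\wt(\tau)$; in the pop case $m=0$ the transition $(*,x,[q,\gamma,q'],*,\varepsilon)$ is included only when $p = q'$.

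For correctness I would prove, by induction on the length of computations and using the unique decomposition of Observation \ref{obs:decomp-PDA}, the correspondence: for all $q,q' \in Q$, $\gamma \in \Gamma$, $v \in \Sigma^*$, and $\eta \in T^*$, there is a computation $(q,v,\gamma) \stackrel{\eta}{\vdash} (q',\varepsilon,\varepsilon)$ in $\M$ if and only if there is a computation $(*,v,[q,\gamma,q']) \vdash^* (*,\varepsilon,\varepsilon)$ in $\M'$ whose sequence of underlying original transitions is exactly $\eta$. In the forward direction the decomposition reads off the (unique) boundary states that must be guessed; in the backward direction these states are read off the triples appearing on the stack. Since corresponding computations use the \emph{identical} sequence of original transitions and $\Val$ depends only on the sequence of weights, they carry equal weight. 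Specialising to $q = q_0$, $\gamma = \gamma_0$, $q' = q_f$ yields a weight-preserving bijection between $\Theta_\M(w)$ and $\Theta_{\M'}(w)$ for every $w \in \Sigma^*$.

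From this bijection the remaining claims follow: $\Theta_{\M'}(w)$ is finite exactly when $\Theta_\M(w)$ is (and completeness of $K$ is a property of $K$, not of the machine), so $\M'$ is again a WPDA; the number of accepting computations per word is preserved, so $\M'$ is unambiguous iff $\M$ is; and weight preservation gives $\beh \M' \beh = \beh \M \beh$. The main obstacle is precisely the bijection claim, because the single-state machine must \emph{guess} the intermediate boundary states, so a priori several $\M'$-computations could collapse onto one $\M$-computation. The uniqueness in Observation \ref{obs:decomp-PDA} is what pins these guesses down, turning the correspondence into a genuine bijection and thereby preserving both unambiguity and the exact multiset of computation weights.
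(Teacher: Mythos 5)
Your proposal is correct and follows essentially the same route as the paper: state-normalize via Lemma \ref{lm:WPDA-normalized}, apply the triple construction with $\Gamma' = Q\times\Gamma\times Q$ and initial symbol $(q_0,\gamma_0,q_f)$, and use the uniqueness in Observation \ref{obs:decomp-PDA} to obtain a weight-preserving bijection between $\Theta_{\M}(w)$ and $\Theta_{\M'}(w)$, from which equivalence and preservation of unambiguity follow. The only cosmetic difference is that you phrase the correspondence as an induction on computation length, whereas the paper defines the map $\varphi$ directly on transition occurrences; the substance is identical.
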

\begin{proof}\sloppy Let $\M$ be a WPDA. By Lemma
  \ref{lm:WPDA-normalized} we can assume that $\mathcal{M}$ is state normalized and has the form $(Q,\Gamma,q_0,\gamma_0,\{q_f\},T,\wt)$. Then we construct the PDA with weights $\M' = (\{\ast\},\Gamma',\ast, (q_0,\gamma_0,q_f), \{\ast\},T',\wt')$ and $\Gamma' = Q \times \Gamma \times Q$ as follows. For each transition 
\[(q,x,\gamma,p_0,\gamma_1\ldots \gamma_k) \in T
\] 
and every $p_1,\ldots,p_k \in Q$, the transition 
\[
(\ast,x,(q,\gamma,p_k),\ast,(p_0,\gamma_1,p_1)(p_1,\gamma_2,p_2)\ldots (p_{k-1},\gamma_k,p_k))
\]
is in $T'$. For $k=0$ this reads: if $(q,x,\gamma,p_0,\varepsilon) \in T$, then $(\ast,x,(q,\gamma,p_0),\ast,\varepsilon) \in T'$.

Moreover, 
\begin{align*}
&\wt'((\ast,x,(q,\gamma,p_k),\ast,(p_0,\gamma_1,p_1)(p_1,\gamma_2,p_2)\ldots (p_{k-1},\gamma_k,p_k))) \\
&= \wt((q,x,\gamma,p_0,\gamma_1\ldots \gamma_k)))\enspace.
\end{align*}

For every computation $\theta$ of ${\cal M}$ we construct a
computation  $\varphi(\theta)$ of $\M'$ as follows. Let $\theta \in
\Theta_\M(w)$ and $\tau = (q,x,\gamma,p,\gamma_1\ldots \gamma_k)$ be a transition which occurs in $\theta$. Then
$\theta$ can be decomposed into
\[
(q_0,w,\gamma_0) 
\stackrel{\theta_1}{\vdash} (q,xu,\gamma\mu)
\stackrel{\tau}{\vdash} (p,u,\gamma_1\ldots \gamma_k\mu) 
\stackrel{\theta_2}{\vdash} (q_f,\varepsilon,\varepsilon) \enspace.
\]
Let $\eta$, $v$, and $p'$ be the, resp., shortest prefix of
$\theta_2$, shortest prefix of $u$, and uniquely determined state such that 
\[
(p,v,\gamma_1\ldots \gamma_k) \stackrel{\eta}{\vdash} (p',\varepsilon,\varepsilon)\enspace.
\]
Put $p_0=p$. Then, by Oberservation \ref{obs:decomp-PDA}, for $i=1,\ldots,k$ we can successively find a shortest word $v_i \in \Sigma^*$, a uniquely determined state $p_i \in Q$, and a shortest computation $\eta_i$ such that $(p_{i-1},v_i,\gamma_i) \stackrel{\eta_i}{\vdash} (p_i,\varepsilon,\varepsilon)$ and  $v = v_1 \ldots v_k$, $p_k=p'$, and $\eta = \eta_1\ldots \eta_k$.
Then we replace the occurrence $\tau$ in $\theta = \theta_1 \tau \theta_2$ by 
\[
\varphi(\tau) = (\ast,x,(q,\gamma,p_k),\ast,(p,\gamma_1,p_1)(p_1,\gamma_2,p_2)\ldots (p_{k-1},\gamma_k,p'))\enspace.
\]
We extend $\varphi$ to computations by letting 
\[
\varphi(\tau_1\ldots \tau_n) = \varphi(\tau_1) \ldots \varphi(\tau_n)\enspace.
\]
In particular, $\varphi: \Theta_\M(w) \rightarrow \Theta_{\M'}(w)$ is
bijective. Thus, since $\M$ is a WPDA, also $\M'$ is a WPDA. 
Since $\wt(\tau) = \wt(\varphi(\tau))$, we have $\wt(\theta) =
\wt(\varphi(\theta))$. Hence, $\beh \M \beh = \beh \M'\beh$.

It is clear that $\M$ is unambiguous if and only if $\M'$ is unambiguous.
\end{proof}

A classical construction of the union of two state normalized WPDA
shows that $\PDA(\Sigma,K)$ is closed under sums.

\begin{lm}\label{lm:WPDA-sum} Let $s_1,s_2 \in \PDA(\Sigma,K)$. Then $s_1 + s_2 \in \PDA(\Sigma,K)$.
\end{lm}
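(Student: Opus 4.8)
The plan is to realise $s_1 + s_2$ by a disjoint-union construction on pushdown automata that branches nondeterministically at the very first step into a simulation of one of the two given machines. Since $s_1, s_2 \in \PDA(\Sigma,K)$, I would fix WPDA $\M_i = (Q_i,\Gamma_i,q_{0,i},\gamma_{0,i},F_i,T_i,\wt_i)$ with $s_i = \beh \M_i \beh$ for $i \in \{1,2\}$. By renaming (no normal form is needed here) we may assume that the state sets $Q_1,Q_2$ and the pushdown alphabets $\Gamma_1,\Gamma_2$ are pairwise disjoint. I then define a PDA with weights $\M = (Q,\Gamma,q_0,\gamma_0,F,T,\wt)$ with fresh symbols $q_0,\gamma_0$, where $Q = Q_1 \cup Q_2 \cup \{q_0\}$, $\Gamma = \Gamma_1 \cup \Gamma_2 \cup \{\gamma_0\}$, $F = F_1 \cup F_2$, and $T = T_1 \cup T_2 \cup \{\tau_1,\tau_2\}$, where $\tau_i = (q_0,\varepsilon,\gamma_0,q_{0,i},\gamma_{0,i})$ is the branching transition into $\M_i$. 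The weight assignment $\wt$ extends $\wt_1$ and $\wt_2$ and sets $\wt(\tau_1) = \wt(\tau_2) = 1$.

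The key observation is that $\tau_i$ pops the fresh symbol $\gamma_0$ and pushes $\gamma_{0,i}$, so after this first step $\M$ is in the configuration $(q_{0,i},w,\gamma_{0,i})$, the initial configuration of $\M_i$ on $w$. Because $Q_1,Q_2$ and $\Gamma_1,\Gamma_2$ are disjoint, no transition can ever move between the two copies; hence every $q_0$-computation of $\M$ on $w$ has the form $\tau_i\theta$ with $\theta$ a $q_{0,i}$-computation of $\M_i$ on $w$ (ending in $(p,\varepsilon,\varepsilon)$ with $p \in F_i \subseteq F$), and conversely. Thus $\theta \mapsto \tau_i\theta$ is a bijection between the disjoint union $\Theta_{\M_1}(w) \uplus \Theta_{\M_2}(w)$ and $\Theta_\M(w)$, the union being genuinely disjoint since the first transition $\tau_1$ or $\tau_2$ records the chosen $i$. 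In particular $\Theta_\M(w)$ is finite whenever $\Theta_{\M_1}(w)$ and $\Theta_{\M_2}(w)$ are, so $\M$ is again a WPDA (and this holds trivially if $K$ is complete).

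It remains to track weights. For a computation $\tau_i\theta$ with $\theta = \tau'_1\ldots\tau'_n$, requirement 4 of a unital valuation monoid (neutrality of the unit $1$ for $\Val$) gives $\wt(\tau_i\theta) = \Val(1,\wt(\tau'_1),\ldots,\wt(\tau'_n)) = \Val(\wt(\tau'_1),\ldots,\wt(\tau'_n)) = \wt(\theta)$. Summing over $\Theta_\M(w)$, splitting along the bijection and the partition into the two copies, yields
\[
(\beh \M \beh, w) = \sum_{\theta \in \Theta_{\M_1}(w)} \wt(\theta) + \sum_{\theta \in \Theta_{\M_2}(w)} \wt(\theta) = (s_1,w) + (s_2,w) = (s_1 + s_2, w)\enspace,
\]
so $\beh \M \beh = s_1 + s_2 \in \PDA(\Sigma,K)$.

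The construction is essentially routine; the two points deserving care are precisely the ones that make the weighted setting work. First, inserting the weight-$1$ branching transition must leave the weight of every computation unchanged, which is exactly what requirement 4 guarantees (this is where a non-unital setting would fail). Second, the disjointness of the state sets and stack alphabets is what forces each computation of $\M$ to remain inside a single simulated machine, so that $\Theta_\M(w)$ decomposes cleanly as a disjoint union and the sum genuinely separates into the two behaviors. No commutativity, associativity, or distributivity of $\Val$ is used anywhere.
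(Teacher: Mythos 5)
Your proof is correct. The underlying idea is the same as the paper's (disjoint union plus an initial nondeterministic choice), but the implementation differs: the paper first invokes Lemma \ref{lm:WPDA-normalized} to put $\M_1,\M_2$ into state normalized form, then identifies the two initial states and the two initial pushdown symbols and takes the plain union $T=T_1\cup T_2$ with no new transitions; the fact that no transition re-enters the shared initial state is what keeps each computation inside one copy. You instead keep everything disjoint, add a fresh initial state and stack symbol, and branch via two weight-$1$ $\varepsilon$-transitions. Your route buys self-containedness (no appeal to the normalization lemma) at the price of explicitly invoking requirement 4 of unital valuation monoids to argue that prepending a weight-$1$ transition does not change $\wt$ of a computation; the paper's route keeps the computations of $\M$ literally equal to those of $\M_1$ and $\M_2$, so no weight bookkeeping is needed at this point, but it hides the same unit-neutrality argument inside the normalization lemma. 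You also spell out the disjointness of $\Theta_{\M_1}(w)$ and $\Theta_{\M_2}(w)$ and the resulting splitting of the (possibly infinite) sum, which the paper leaves implicit; both versions are sound.
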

\begin{proof} Let $\M_1 = (Q_1,\Gamma_1,q_{0,1},\gamma_{0,1},F_1,T_1,\wt_1)$ and 
$\M_2 = (Q_2,\Gamma_2,q_{0,2},\gamma_{0,2},F_2,T_2,\wt_2)$ be state normalized WPDA (compare Lemma \ref{lm:WPDA-normalized})  such that $s_1 = \beh \M_1\beh$ and $s_2 = \beh \M_2\beh$. By renaming, we may assume that 
\begin{itemize}
\item $q_{0,1}= q_{0,2}$ and $Q_1 \cap Q_2 = \{q_{0,1}\}$; henceforth we will denote this initial state by $q_0$, and 
\item  $\gamma_{0,1} = \gamma_{0,2}$; henceforth we will denote this initial pushdown symbol by $\gamma_0$.
\end{itemize}

Now we construct the WPDA $\M = (Q,\Gamma,q_0,\gamma_0,F,T,\wt)$ with
$Q = Q_1 \cup Q_2$, $\Gamma = \Gamma_1 \cup \Gamma_2$, $F = F_1 \cup
F_2$, and $T = T_1 \cup T_2$. Moreover, we let $\wt|_{T_1} = \wt_1$
and  $\wt|_{T_2} = \wt_2$. Then $\beh \M \beh = \beh \M_1\beh + \beh \M_2\beh$.
\end{proof}

We mention that in \cite{chavel12} pushdown games with quantitative objectives were investigated. 
Such games are formalized on the basis of paths through pushdown systems where the latter are particular pushdown automata with weights: the input alphabet $\Sigma$ is a singleton and no $\varepsilon$-transition occurs. 
Moreover, as weight structure, pushdown systems employ the set of integers with mean-payoff. 
Roughly, the mean-payoff of a computation is the average of its transition weights (taking the limit superior of the averages of finite prefixes on infinite computations). 
Then in \cite{chavel12} game-theoretic problems on the set of all paths for which the mean-payoff is above a given threshold are investigated.
Finally, we note that weighted pushdown systems over bounded idempotent semirings were used in interprocedural dataflow analysis \cite{repschjhamel05}.


\section{Equivalence of WCFG and WPDA}

A classical result says that a language $L$ is context-free iff  $L$  is accepted by a pushdown automaton. This was extended to algebraic series and weighted pushdown automata with weights taken in semirings in \cite{kuisal86}. The goal of this small section is to prove the generalization to arbitrary unital valuation monoids. 

For this we use the following concept. 
Let $\M = (\{\ast\},\Sigma,\Gamma,\ast,\gamma_0,\{\ast\},T,\wt_{\M})$ be a WPDA over $K$ with one state and $\G = (N,\Sigma,Z,P,\wt_{\G})$ be a WCFG over $K$ in head normal form. We say that $\M$ and $\G$ are \emph{related} if
$\Gamma = N$,
$\gamma_0 = Z$, 
$\tau = (\ast,x,A,\ast,B_1B_2\ldots B_n) \in T$ iff 
$\rho = (A \rightarrow xB_1B_2\ldots B_n)$ is in $P$;
 $\wt_{\M}(\tau) = \wt_{\G}(\rho)$ if $\tau$ and $\rho$ correspond to each other as above. Then the following lemma is easy to see (cf. e.g. \cite{koz97}).

\begin{lm}\label{lm:related} Let $\M$ be a WPDA with one state  and $\G$ be a WCFG in head normal form. If $\M$ and $\G$ are related, then $\beh \M \beh = \beh \G\beh$. Moreover, $\M$ is unambiguous iff $\G$ is unambiguous.
\end{lm}

\begin{proof} For every $w \in \Sigma^*$, the set $\Theta(w)$ of computations on $w$ corresponds bijectively to the set $D(w)$ of derivations of $w$, and this correspondence preserves their weights. This implies $(\beh \M \beh,w) = (\beh \G\beh,w)$.
\end{proof}

The previous lemma and the normal forms of WPDA and WCFG imply the following theorem.

\begin{theo}\label{th:PDA=CF} For every alphabet $\Sigma$ and
  unital valuation monoid $K$ we have $\PDA(\Sigma,K) =
  \CF(\Sigma,K)$ and $\uPDA(\Sigma,K) =
  \uCF(\Sigma,K)$. 
\end{theo}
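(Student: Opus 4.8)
The plan is to prove both inclusions $\CF(\Sigma,K) \subseteq \PDA(\Sigma,K)$ and $\PDA(\Sigma,K) \subseteq \CF(\Sigma,K)$ via the classical constructions translating grammars to pushdown automata and back, checking at each step that derivations and computations correspond bijectively and that the $\Val$-weight is preserved. Because $\Val$ requires only that the \emph{sequence} of atomic weights be reproduced (up to the insertion or deletion of $1$'s, by requirement~4 of a unital valuation monoid) and that the support behave correctly (requirement~3), we never need associativity, commutativity, or distributivity; we only need each construction to map a derivation/computation to a computation/derivation whose transition-weight sequence equals the original production-weight sequence, possibly padded with $1$'s.

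For the inclusion $\CF(\Sigma,K) \subseteq \PDA(\Sigma,K)$, I would start from a WCFG $\G$ and, by Lemma~\ref{lm:WCFG:normal-form}, assume it is in head normal form, so every production reads $A \rightarrow xB_1\ldots B_k$. I would then build a one-state WPDA $\M$ whose pushdown alphabet is $N$, simulating leftmost derivations: the top-of-stack nonterminal $A$ is expanded by a transition $(\ast, x, A, \ast, B_1\ldots B_k)$ carrying weight $\wt(A \rightarrow xB_1\ldots B_k)$, with $\gamma_0 = Z$. The head normal form guarantees that reading the terminal $x$ and the grammar step are fused into a single transition, so that a leftmost derivation $d = \rho_1\ldots\rho_n$ corresponds exactly to the computation $\theta = \tau_{\rho_1}\ldots\tau_{\rho_n}$ with the \emph{same} weight sequence, whence $\wt(\theta) = \Val(\wt(\tau_{\rho_1}),\ldots) = \Val(\wt(\rho_1),\ldots) = \wt(d)$. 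Observation~\ref{ob:dec} is the tool that makes the bijection $D(w) \leftrightarrow \Theta(w)$ precise and shows it respects decomposition, which also settles the unambiguous case and the completeness-versus-local-finiteness side condition.

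For the converse $\PDA(\Sigma,K) \subseteq \CF(\Sigma,K)$, I would take a WPDA $\M$ and, by Lemmas~\ref{lm:WPDA-normalized} and~\ref{lm:WPDA-one-state}, assume it is state normalized with a single state; then the classical triple construction has already collapsed the state information into the pushdown symbols $Q\times\Gamma\times Q$, so the grammar-from-PDA translation is clean. I would introduce a nonterminal for each pushdown symbol (plus the initial $Z$), turn each transition $(\ast, x, \gamma, \ast, \gamma_1\ldots\gamma_k)$ into a production $A_\gamma \rightarrow x\,A_{\gamma_1}\ldots A_{\gamma_k}$ of the same weight, and invoke Observation~\ref{obs:decomp-PDA} to show that a computation decomposes into sub-computations on each stack symbol exactly as an $A$-derivation decomposes via Observation~\ref{ob:dec}; this yields a weight- and support-preserving bijection $\Theta(w) \leftrightarrow D(w)$ and again preserves unambiguity.

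The main obstacle I anticipate is purely bookkeeping rather than conceptual: I must verify that every construction preserves the \emph{length and order} of the atomic-weight sequence fed to $\Val$, since without associativity I cannot regroup, and that the auxiliary transitions or productions introduced (e.g. the terminal-producing productions $A_\sigma \rightarrow \sigma$ of the head normal form, or the $\tau_{\mathrm{in}}$ and $\tau_p$ transitions of state normalization) carry weight $1$ so that requirement~4 lets me delete them from the sequence without changing the $\Val$-value. I also have to confirm that the WPDA/WCFG side conditions match up, namely that $\Theta(w)$ is finite for all $w$ precisely when $D(w)$ is, so the defining sums exist on both sides or $K$ is complete; the bijections established above make this automatic. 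Establishing the unambiguous versions $\uPDA(\Sigma,K) = \uCF(\Sigma,K)$ then requires no extra work, since each bijection restricts correctly and each normalization lemma was already stated in an unambiguity-preserving form.
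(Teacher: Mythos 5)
Your proposal is correct and follows essentially the same route as the paper: both directions reduce to the observation that one-state WPDA and WCFG in head normal form are in a weight-preserving, unambiguity-preserving bijective correspondence (the paper's notion of ``related'' machines and grammars, Lemma~\ref{lm:related}), using Lemmas~\ref{lm:WCFG:normal-form}, \ref{lm:WPDA-normalized} and~\ref{lm:WPDA-one-state} exactly as you do. The only cosmetic difference is that the paper takes the nonterminal set to be literally the pushdown alphabet with $Z=\gamma_0$ rather than adjoining a fresh start symbol, so no auxiliary weight-$1$ production is needed in the core correspondence.
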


\begin{proof} If $s \in \PDA(\Sigma,K)$, then 
by Lemma \ref{lm:WPDA-one-state} there is a WPDA ${\cal M}$ with one state such that $\beh {\cal M}\beh =s$. Next we construct the WCFG ${\cal G}$ which is related to ${\cal M}$. By Lemma \ref{lm:related} we obtain that $s \in \CF(\Sigma,K)$. Correspondingly, the converse follows from Lemmas \ref{lm:WCFG:normal-form} and \ref{lm:related}. 
\end{proof}

\section{Theorem of Chomsky-Sch\"utzenberger}
\label{sec:Chomsky-Sch}

In this section let $K$ again be a unital valuation monoid. The
goal of this section will be to prove a quantitative version of the
Chomsky-Sch\"utzenberger Theorem. 
Recently,  in \cite{hul09} the Chomsky-Sch\"utzenberger Theorem has been used as a pattern for a parsing algorithm of probabilistic context-free languages.

Let $Y$ be an alphabet. Then we let $\overline{Y} = \{\overline{y}
\mid y \in Y\}$. The \emph{Dyck language over $Y$}, denoted by $\D_Y$,
is the language which is generated by the CFG $\G_Y =
(N,Y \cup \overline{Y}, Z,P)$ with $N = \{Z\}$ and the rules $Z
\rightarrow y Z \overline{y}$ for any $y \in Y$, $Z \rightarrow ZZ$, and $Z \rightarrow
\varepsilon$. 

Next we introduce monomes and alphabetic morphisms.
A series $s \in K\fl \Sigma^*\fr$  is called a \emph{monome} if $\supp(s)$ is empty or a singleton. 
If $\supp(s) = \{w\}$, then we also write $s = (s,w).w$\enspace.
We let $K[\Sigma \cup \{\varepsilon\}]$ denote the set of all monomes with support in $\Sigma \cup \{\varepsilon\}$.
 
Let $\Delta$ be an alphabet and $h: \Delta \rightarrow K[\Sigma
\cup \{\varepsilon\}]$ be a mapping. The  \emph{alphabetic morphism induced by $h$} is the mapping $h': \Delta^* \rightarrow  K\fl \Sigma^*\fr$ such that for every $n \ge 0$, $\delta_1,\ldots,\delta_n \in \Delta$ with $h(\delta_i) = a_i . y_i$ we have 
$
h'(\delta_1 \ldots \delta_n) = \Val(a_1,\ldots,a_n).y_1\ldots y_n\enspace.
$
Note that $h'(v)$ is a monome for every $v \in \Delta^*$, and $h'(\varepsilon) = 1.\varepsilon$.
If $L \subseteq \Delta^*$ such that the family $(h'(v) \mid v \in L)$
is locally finite or if $K$ is complete, we let $h'(L) = \sum_{v\in L}
h'(v)$. In the sequel we identify $h'$ and $h$.

We also call a mapping $h: \Delta \rightarrow \Sigma \cup \{\varepsilon\}$ and its unique extension  to a morphism from $\Delta^*$ to $\Sigma^*$ an \emph{alphabetic morphism}. In this case, if $r \in K\fl \Delta^*\fr$ is such that $\{v \in h^{-1}(w) \mid (r,v) \not=0\}$ is finite for each $w \in \Sigma^*$, or if $K$ is complete, we define $h(r) \in K\fl \Sigma^*\fr$  by letting $(h(r),w) = \sum_{v \in \Delta^*, h(v)=w} (r,v)$.

Next we introduce the intersection of a series with a language as
follows. Let $s \in K\fl \Sigma^*\fr$ and $L \subseteq \Sigma^*$. We define the
series $s \cap L \in K\fl \Sigma^*\fr$ by letting $(s \cap L, w) = (s,w)$ if $w \in L$, and $(s \cap L, w) = 0$ otherwise.

Finally, a {\it weighted finite automaton} over $K$ and $\Sigma$ (for short: WFA) is a tuple ${\cal A} = (Q,q_0,F,T,\wt)$ where $Q$ is a finite set (states), $q_0 \in Q$ (initial state), $F \subseteq Q$ (final states), $T \subseteq Q \times \Sigma \times Q$ (transitions), and $\wt\colon T \rightarrow K$ (transition weight function). 
We call ${\cal A}$ {\em deterministic} if for every $q \in Q$ and $\sigma \in \Sigma$, there is at most one $p \in Q$ with $(q,\sigma,p) \in T$. 

If $w = \sigma_1\ldots \sigma_n \in \Sigma^*$ where $n \ge 0$ and $\sigma_i \in \Sigma$, a {\em
   path $P$  over $w$} is a sequence  $P = (q_0,\sigma_1,q_1) \ldots
 (q_{n-1},\sigma_n, q_n) \in T^*$. The path $P$ is {\em successful} if
 $q_n \in F$.  The {\it weight of} $P$ is the value
$
\wt(P) = \mathrm{val}(\wt((q_0,\sigma_1,q_1)), \ldots ,\wt((q_{n-1},\sigma_n, q_n))) \enspace.
$
The {\it behavior} of ${\cal A}$ is the series $\beh {\cal A} \beh \in K\fl \Sigma^* \fr$ such that for every $w \in \Sigma^*$, 
$(\beh {\cal A} \beh , w) = \sum_{\substack{P \text{ succ. path}\\\text{over } w}}\wt(P).$
A series $s \in K\fl \Sigma^*\fr$ is called {\em deterministically recognizable} if $s = \beh {\cal A}\beh$ for some deterministic WFA ${\cal A}$.

Our main result will be:

\begin{theo} \label{th:char} Let $K$ be a unital valuation monoid and  $s \in K\fl \Sigma^*\fr$ be a series. Then the following four statements are equivalent.
\begin{enumerate}
\item $s \in \CF(\Sigma,K)$.
\item There are
  an alphabet $Y$, a recognizable language $R$ over
  $Y\cup\overline{Y}$, and an alphabetic morphism $h: Y \cup \overline{Y}
  \rightarrow K[\Sigma \cup \{\varepsilon\}]$ such that $s = h(\D_Y \cap R)$.
\item There are an alphabet $\Delta$, an unambiguous CFG $\G$ over
  $\Delta$, and an alphabetic morphism $h: \Delta \rightarrow
  K[\Sigma \cup \{\varepsilon\}]$ such that $s = h(L(\G))$.

\item  There are an alphabet $Y$, a deterministically recognizable series $r \in K\fl (Y \cup \bar{Y})^*\fr$, and an alphabetic morphism $h: Y \cup \overline{Y}
  \rightarrow \Sigma \cup \{\varepsilon\}$ such that $s = h(r \cap \D_Y)$.
\end{enumerate}
Moreover, if $K$ is complete and completely idempotent, then 1-4
are also
equivalent to:
\begin{enumerate}
\item[5.] There are an alphabet $\Delta$, a context-free language $L$
  over $\Delta$, and an alphabetic morphism $h: \Delta \rightarrow
  K[\Sigma \cup \{\varepsilon\}]$ such that $s = h(L)$.
\end{enumerate}
\end{theo}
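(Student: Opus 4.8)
The plan is to establish the three-way equivalence for arbitrary $K$ through the cycle $(1)\Rightarrow(2)\Rightarrow(3)\Rightarrow(1)$, and then to close the square by the immediate inclusion $(3)\Rightarrow(4)$ together with the implication $(4)\Rightarrow(1)$, which is exactly where completeness and complete idempotency of $K$ will be used.

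For $(1)\Rightarrow(2)$ I would carry out a weighted refinement of the classical Chomsky--Sch\"utzenberger bracketing. Starting from a WCFG $\G$, by Lemma~\ref{lm:WCFG:normal-form} I may assume $\G=(N,\Sigma,Z,P,\wt)$ is in head normal form. I take $Y$ to consist of one opening bracket $a_\rho$ per production $\rho=(A\to xB_1\cdots B_k)$, indexed so that being the left-to-right encoding of a $\G$-derivation tree becomes a condition on \emph{adjacent} bracket symbols (each bracket records its own production together with the parent production and the child slot it occupies); intersecting this local condition with the Dyck matching gives the recognizable language $R$, so that $\mathrm{D}_Y\cap R$ is precisely the set of codes of derivation trees of $\G$. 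The morphism is $h(a_\rho)=\wt(\rho).x$ on opening and $h(\overline{a_\rho})=1.\varepsilon$ on closing brackets. The crux is that the opening brackets occur in the code in pre-order, which coincides with the leftmost-derivation order: thus for the code $v$ of a derivation $d=\rho_1\cdots\rho_n$ the closing brackets contribute only the unit $1$ and are discarded by requirement~4 of a unital valuation monoid, so $h(v)=\Val(\wt(\rho_1),\ldots,\wt(\rho_n)).w=\wt(d).w$. Summing over $v\in\mathrm{D}_Y\cap R$ then reproduces $(\beh\G\beh,w)$ term by term, and the required local finiteness (or completeness) transfers directly from the defining condition on a WCFG, since the codes with $\Sigma$-projection $w$ are in bijection with $D(w)$.

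For $(2)\Rightarrow(3)$ the key observation is that $\mathrm{D}_Y\cap R$ admits an \emph{unambiguous} context-free grammar: I start from the standard unambiguous grammar for $\mathrm{D}_Y$ (rules $Z\to\varepsilon$ and $Z\to y\,Z\,\overline{y}\,Z$ for $y\in Y$) and form its product with a deterministic automaton for $R$; unambiguity is preserved because a deterministic automaton assigns each word a unique run. Keeping the same $h$ and setting $\Delta=Y\cup\overline{Y}$ gives $s=h(L(\G))$ with $\G$ unambiguous, and it is exactly this unambiguity that makes the representation work for all $K$, with no multiplicities to absorb. The implication $(3)\Rightarrow(1)$ is the reverse pullback: put the unambiguous CFG over $\Delta$ into head normal form and build a WCFG over $\Sigma$ by replacing, in every production reading a letter $\delta$ with $h(\delta)=a_\delta.y_\delta$, that letter by $y_\delta\in\Sigma\cup\{\varepsilon\}$ and assigning the production the weight $a_\delta$ (productions reading no $\Delta$-letter get weight $1$). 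Unambiguity puts the derivations of $w$ in the new grammar in bijection with the $v\in L(\G)$ whose $\Sigma$-projection is $w$, and again by requirement~4 the derivation weight equals the coefficient of $h(v)$, so the two behaviours agree.

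Finally, $(3)\Rightarrow(4)$ is trivial since $L(\G)$ is context-free, and for $(4)\Rightarrow(1)$ I would run the same head-normal-form pullback, now from an \emph{arbitrary}, possibly inherently ambiguous, CFG for $L$. Here the pullback grammar counts each $h(v)$ with a multiplicity equal to the number of derivations of $v$ in the chosen grammar; completeness guarantees that the pullback is a WCFG even when this multiplicity, or the number of derivations of $w$, is infinite, while complete idempotency ($\sum_I a=a$) collapses the repeated summand back to a single copy of the coefficient of $h(v)$, yielding exactly $s$. This pinpoints why statement~4 needs precisely these two hypotheses. I expect the recurring obstacle to be the order-bookkeeping for $\Val$: since the valuation is neither commutative nor associative, each construction must present the production weights to $\Val$ in exactly the leftmost-derivation order and let all auxiliary (bracket or erased) steps contribute only the neutral $1$. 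The genuinely delicate point lies in $(3)\Rightarrow(1)$ for non-complete $K$: the morphism-side hypothesis (local finiteness of $(h(v))_{v\in L(\G)}$) is strictly weaker than the WCFG-side requirement that each $D(w)$ be finite, because $h$ may erase letters and so create infinitely many codes over a fixed $w$. Local finiteness forces every such $\Sigma$-silent cycle to contribute vanishing weight once repeated beyond a bounded length, and I would exploit this to normalize away silent cycles and extract a genuinely derivation-finite WCFG before reading off the behaviour.
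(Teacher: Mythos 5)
Your proof is correct in outline but routes the equivalences differently from the paper. The paper never constructs the Dyck encoding itself: it proves $1\Leftrightarrow 3$ first (Lemma~\ref{th:mezwri} turns a head-normal-form WCFG into an unambiguous CFG over $\Delta=P$ by prefixing each right-hand side with its own production symbol, and Lemma~\ref{lm:closure} gives the pullback), and then obtains $3\Rightarrow 2$ by applying the \emph{classical, unweighted} Chomsky--Sch\"utzenberger theorem to that unambiguous CFG over $\Delta$ and composing the resulting letter-to-letter morphism $g$ with $h$. You instead prove $1\Rightarrow 2$ directly by redoing the bracket encoding with weights; this is more self-contained and makes vivid exactly where requirement~4 of unital valuation monoids (neutrality of $1$ under $\Val$) and the coincidence of pre-order with leftmost-derivation order are used, at the cost of re-verifying that $\D_Y\cap R$ captures precisely the derivation-tree codes. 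The paper's factorization buys a black-box use of the classical theorem, but then has to argue separately that $g^{-1}(v)\cap\D_Y\cap R$ is a singleton and that local finiteness transfers along $g$. For $3\Rightarrow 1$ and $4\Rightarrow 1$ the paper works on the automaton side (Lemma~\ref{lm:closure} codes the $h$-preimage letter into the states of a PDA and invokes Theorem~\ref{th:PDA=CF}), whereas you pull back on the grammar side; these are equivalent in substance. The delicate point you flag at the end --- that for non-complete $K$ an erasing $h$ can produce infinitely many preimages of a fixed $w$ all with coefficient $0$, so that local finiteness of $(h(v))_{v\in L}$ does not by itself make the pullback derivation-finite --- is genuine, but it is present in the paper's Lemma~\ref{lm:closure} as well (the assertion ``it follows that $\M'$ is a WPDA'' silently assumes finitely many preimages); your sketch of normalizing away $\Sigma$-silent cycles is not carried out, but you are not below the paper's own level of detail here.
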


The following lemma proves the implication $1 \Rightarrow 3$ of Theorem
\ref{th:char}.

\begin{lm}\label{th:mezwri}  Let $s \in \CF(\Sigma,K)$. Then there are
  an alphabet $\Delta$, an unambiguous CFG $\G$ over $\Delta$, and an
  alphabetic morphism $h: \Delta \rightarrow K[\Sigma \cup
  \{\varepsilon\}]$ such that 
$s = h(L(\G))$.
\end{lm}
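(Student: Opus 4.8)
The plan is to start from a WCFG representing $s$, normalize it, and then encode each of its derivations as a single word over an alphabet of productions. Concretely, since $s \in \CF(\Sigma,K)$, there is a WCFG $\G_0 = (N,\Sigma,Z,P,\wt)$ with $s = \beh \G_0 \beh$, and by Lemma \ref{lm:WCFG:normal-form} I may assume $\G_0$ is in head normal form, so every production has the shape $\rho = (A \to x B_1 \ldots B_k)$ with $x \in \Sigma \cup \{\varepsilon\}$. I take $\Delta = P$ (treating each production as a letter) and build the ordinary, unweighted CFG $\G = (N,\Delta,Z,P')$ whose productions are $A \to \rho\, B_1 \ldots B_k$, one for each $\rho = (A \to x B_1 \ldots B_k) \in P$; here $\rho$ is read as the terminal symbol and $B_1,\ldots,B_k$ as nonterminals. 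The alphabetic morphism is defined by $h(\rho) = \wt(\rho).x$, which lies in $K\langle\Sigma \cup \{\varepsilon\}\rangle$ precisely because head normal form forces the terminal part of each right-hand side to be a single letter or $\varepsilon$.

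Next I would verify that $\G$ generates exactly the complete leftmost derivations of $\G_0$, viewed as words over $P$. By induction on derivation length, using the decomposition of Observation \ref{ob:dec}, one shows $L_\G(A) = \{ d \in P^* \mid d \in D_{\G_0}(A,w) \text{ for some } w\}$ for each $A \in N$; taking $A = Z$ gives $L(\G) = \bigcup_{w} D(w)$, a disjoint union since a derivation determines the word it derives. For the coefficient computation I use that in head normal form each production emits a single head terminal (or $\varepsilon$), and that in a leftmost derivation these are emitted left to right; hence for $d = \rho_1 \ldots \rho_n \in D(w)$ with $\rho_i = (A_i \to x_i B_{i,1}\ldots)$ one has $x_1 \ldots x_n = w$, and therefore $h'(d) = \Val(\wt(\rho_1),\ldots,\wt(\rho_n)).w = \wt(d).w$. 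The family $(h'(d) \mid d \in L(\G))$ is locally finite precisely because, for each $w$, the contributing $d$ are exactly those in $D(w)$, which is finite since $\G_0$ is a WCFG (or $K$ is complete); so $h(L(\G))$ is well-defined and $(h(L(\G)),w) = \sum_{d \in D(w)} \wt(d) = (s,w)$, i.e. $s = h(L(\G))$.

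The step requiring the most care is unambiguity of $\G$. The key observation is that $P'$ is in bijection with $P$ via its head terminal: for each letter $\rho \in \Delta$ there is exactly one production in $P'$ whose right-hand side begins with $\rho$, and that production has the same left-hand side as $\rho$. Consequently, when reading a word $d = \rho_1 \ldots \rho_n \in L(\G)$ along a leftmost derivation of $\G$, each step must expand the current leftmost nonterminal by the unique production emitting the next unread letter, and applicability forces the left-hand sides to agree; thus the entire leftmost derivation is determined by $d$, so $\G$ is unambiguous. I expect this forcing argument, together with the verification that the head-normal-form and leftmost discipline make the emitted terminals reconstruct $w$ in the correct order, to be the main obstacle; the remaining items are routine bookkeeping.
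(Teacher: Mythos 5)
Your proposal is correct and follows essentially the same route as the paper: normalize to head normal form, take $\Delta = P$, build the CFG with productions $A \to \rho B_1\ldots B_k$ and the morphism $h(\rho)=\wt(\rho).x$, and match derivations of the original grammar with words of the new one. You merely spell out details the paper leaves implicit (the forcing argument for unambiguity and the left-to-right emission of head terminals), both of which are accurate.
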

\begin{proof} By Lemma \ref{lm:WCFG:normal-form} we can assume that $s = \beh \HH \beh$ for some WCFG $\HH = (N,\Sigma,Z,P,\wt)$ in head normal form. 

We let $\Delta = P$, and we construct the CFG $\G = (N,P,Z,P')$ and the mapping $h:
P \rightarrow K\langle\Sigma \cup \{\varepsilon\}\rangle$ such
that, if $\rho = (A \rightarrow x B_1\ldots B_k)$ is in $P$, then $A
\rightarrow \rho B_1 \ldots B_k$ is in $P'$ and we define $h(\rho) =
\wt(\rho).x$. Obviously, $\G$ is unambiguous.

By definition of $h$, we have that $h(d) =
\Val(\wt(\rho_1),\ldots,\wt(\rho_n)).w$ for every $w \in \Sigma^*$ and $d = \rho_1\ldots
\rho_n\in D_{\cal H}(w)$. Hence $\wt(d) = (h(d),w)$. 

It is clear that $L(\G) = \bigcup_{w \in \Sigma^*} D_{\cal H}(w)$ and $D_{\cal H}(w) \cap D_{\cal H}(w') = \emptyset$ for every $w \not= w'$. 
 Hence, $\{d \in L(\G) \mid (h(d),w) \not= 0\} \subseteq D_{\cal H}(w)$,
which is finite by definition of WCFG in case $K$ is not complete. Thus, $(h(d) \mid d
\in L(\G))$ is locally finite if  $K$ is not complete.

Then for every $w \in \Sigma^*$ we have 
\[
\begin{array}{l}
(\beh \HH\beh, w) = \sum_{d \in D_{\HH}(w)} \wt(d) = \sum_{d \in D_{\HH}(w)}
(h(d),w)\\
 =^*  \sum_{d \in L(\G)} (h(d),w) = \left(  \sum_{d \in L(\G)}
  h(d) , \; w \right) = (h(L(\G)),w)
\end{array} 
\] 
where the equation $*$ holds, because for every $d \in L({\cal G}) \setminus D_{\cal H}(w)$ we have $(h(d),w) = 0$. Thus $s = h(L(\G))$. 
\end{proof}

The following lemma proves the implication $3 \Rightarrow 1$ of Theorem
\ref{th:char}.

\begin{lm}\label{lm:closure} Let $L$ be a context-free language over
  $\Delta$ and $h: \Delta \rightarrow K[\Sigma \cup
  \{\varepsilon\}]$ an alphabetic morphism such that $(h(v) \mid
  v \in L)$ is locally finite in case $K$ is not complete. If $L$ can be generated by some
  unambiguous CFG or if  $K$ is complete and completely idempotent, then $h(L) \in \CF(\Sigma,K)$.
\end{lm}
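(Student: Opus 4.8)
The plan is to reverse the construction from Lemma~\ref{th:mezwri}: starting from a CFG $\G_L = (N,\Delta,Z,P_L)$ generating $L$ (unambiguous in the first case), I build a WCFG $\G$ over $\Sigma$ and $K$ that ``reads off'' the weights prescribed by $h$. For every $\delta \in \Delta$ write $h(\delta) = a_\delta.y_\delta$ with $a_\delta \in K$ and $y_\delta \in \Sigma \cup \{\varepsilon\}$. I introduce a fresh nonterminal $A_\delta$ for each $\delta$ and let $\G$ have nonterminals $N \cup \{A_\delta \mid \delta \in \Delta\}$, the \emph{structural} productions obtained from $P_L$ by replacing every terminal occurrence $\delta$ by $A_\delta$ (each given weight $1$), and the \emph{terminal} productions $A_\delta \rightarrow y_\delta$ (each given weight $a_\delta$). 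Thus a derivation of $\G$ is a derivation of $\G_L$ in which each generated terminal symbol $\delta$ is expanded one further step into $y_\delta$.

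Next I would verify weight preservation. Fix $v = \delta_1\ldots\delta_n$ and a leftmost $\G$-derivation $d$ obtained from some $\G_L$-derivation of $v$ by inserting the forced expansion steps. The structural productions all carry weight $1$, so by requirement~4 of a unital valuation monoid (neutrality of $1$ inside $\Val$) they may be struck from the argument list of $\Val$ without changing the value; what remains are the terminal productions. Because the derivation relation is leftmost, the nonterminals $A_{\delta_1},\ldots,A_{\delta_n}$ occupy their positions left-to-right and are therefore expanded in exactly the order $1,\ldots,n$, so the surviving weights appear precisely as the sequence $a_{\delta_1},\ldots,a_{\delta_n}$. Hence $\wt(d) = \Val(a_{\delta_1},\ldots,a_{\delta_n}) = c_v$ and the word produced by $d$ is $y_{\delta_1}\ldots y_{\delta_n} = u_v$, where $h(v) = c_v.u_v$.

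For the summation I would use the bijection between $\G$-derivations of a word $w$ and $\G_L$-derivations of words $v$ with $u_v = w$ (the extra expansion steps being forced). In the unambiguous case each $v \in L$ has a unique $\G_L$-derivation, so $(\beh\G\beh,w) = \sum_{v \in L,\, u_v = w} c_v = (h(L),w)$, the family being summable by the assumed local finiteness, since only finitely many $v$ with $u_v = w$ have $c_v \neq 0$. In the complete and completely idempotent case I group the derivations of $w$ by the word $v$ they project to; since every $\G_L$-derivation of $v$ yields the same weight $c_v$, complete idempotency collapses $\sum_{d \in D_{\G_L}(v)} c_v = c_v$, and completeness legitimises the outer sum over $v$, again giving $(h(L),w)$.

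The main obstacle is showing that $\G$ really is a WCFG when $K$ is not complete, i.e. that $D_\G(w)$ is \emph{finite} for every $w$ --- a structural condition the definition demands but which the naive $\G$ can violate, since $\varepsilon$-mapped symbols (those with $y_\delta = \varepsilon$) may let infinitely many $v \in L$ share the same projection $w$. This is exactly where local finiteness is indispensable: symbols with $a_\delta = 0$ force $c_v = 0$ by requirement~3 and may be deleted from $\G_L$ without altering $h(L)$, while any $\varepsilon$-producing derivation cycle that contributed a nonzero weight arbitrarily often would produce infinitely many $v$ with $u_v = w$ and $c_v \neq 0$, contradicting local finiteness; hence such cycles can be unrolled and removed to yield an equivalent WCFG with finitely many derivations per word. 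I expect this pruning to be the delicate step, because $\Val$ is neither associative nor compositional, so one cannot reason about a cycle's weight contribution in isolation and must instead argue at the level of entire derivations. In the complete case this issue evaporates, as the WCFG definition then permits infinite $D_\G(w)$ outright.
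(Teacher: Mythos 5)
Your route is genuinely different from the paper's: you simulate $h$ at the grammar level, whereas the paper passes to a pushdown automaton $\M$ for $L$ (unambiguous via Theorem \ref{th:PDA=CF} when needed) and builds a weighted PDA over $\Sigma$ whose states are pairs $(q,x)$ with $x\in\Delta\cup\{\varepsilon\}$ recording the next $\Delta$-letter to be consumed, thereby handling non-injectivity of $h$; your fresh nonterminals $A_\delta$ play exactly that disambiguating role on the grammar side, and your weight-preservation argument (weight-$1$ structural productions disappear from $\Val$ by requirement~4, and a leftmost derivation fires $A_{\delta_1},\dots,A_{\delta_n}$ in order) is sound and mirrors the paper's head-normal-form construction in Lemma \ref{lm:WCFG:normal-form}. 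The treatment of the complete and completely idempotent case by grouping derivations according to the projected word $v$ also matches the paper's $*$-marked step.

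There is, however, a genuine gap in your non-complete (unambiguous) case, precisely at the step you flag as delicate. Local finiteness of $(h(v)\mid v\in L)$ bounds only the number of $v$ with $(h(v),w)\neq 0$; it says nothing about the $v$ with $u_v=w$ and $c_v=0$, and $c_v=\Val(a_{\delta_1},\dots,a_{\delta_n})$ can vanish even when every letter weight $a_{\delta_i}$ is nonzero, since requirement~3 is only a one-way implication. Concretely, take $K=\mathbb{Z}_4$ (a non-complete semiring, hence a unital valuation monoid), $L=\delta^*$ generated by the unambiguous grammar $Z\rightarrow \delta Z\mid\varepsilon$, and $h(\delta)=2.\varepsilon$: the family is locally finite because $2^n=0$ for $n\ge 2$, yet infinitely many $v$ project onto $w=\varepsilon$, so $D_\G(\varepsilon)$ is infinite and your $\G$ is not a WCFG. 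Your proposed dichotomy --- either a cycle contributes nonzero weight arbitrarily often (contradicting local finiteness) or it can be removed --- never gets off the ground here: no contradiction arises, and since $\Val$ is not compositional one cannot attribute a weight to the cycle at all, as you yourself observe. So the pruning step is not a proof but a restatement of the difficulty. To be fair, the paper's own argument performs the dual construction on the PDA side and simply asserts ``It follows that $\M'$ is a WPDA'' at the corresponding point without addressing weight-zero computations; you have correctly identified a subtlety the paper glosses over, but your resolution of it does not work, and closing it requires a genuinely different idea for eliminating the infinitely many weight-$0$ derivations (or a relaxation of the finiteness clause in the definition of WCFG).
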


\begin{proof} Let $\M =(Q,\Gamma,q_0,\gamma_0,F,T)$ be a PDA over
  $\Delta$ with $L(\M) =L$. Moreover, by Theorem \ref{th:PDA=CF}, if
  $L = L(\G)$ for some unambiguous CFG $\G$, then we can assume that $\M$ is unambiguous.
 Let $\overline{\delta} \in \Delta$ be an arbitrary, but fixed element. 

\sloppy  The following construction employs the same technique as in
\cite[Lemma 5.7]{drovog12} of coding the preimage of $h$ into the state
set; thereby non-injectivity of $h$ is handled appropriately. We construct the PDA over $\Sigma$ with weights $\M' =
(Q',\Gamma,q_0',\gamma_0,F',T',\wt)$ where $Q' = \{q_0'\} \cup  Q \times
(\Delta \cup \{\varepsilon\})$ for some $q_0' \not\in Q$, $F' = F\times \{\overline{\delta}\}$, and $T'$ and $\wt$ are defined as follows.

\begin{itemize}
\item For every $x \in \Delta \cup \{\varepsilon\}$, the rule $\tau =
  (q_0',\varepsilon,\gamma_0,(q_0,x),\gamma_0)$ is in $T'$ and
  $\wt(\tau) =1$.

\item Let $\tau = (q,x,\gamma,p,\pi) \in T$ and $x' \in \Delta\cup
\{\varepsilon\}$. 
\begin{itemize}
\item If $x \in \Delta$ and $h(x) = a.y$, then $\tau' =
  ((q,x),y,\gamma,(p,x'),\pi) \in T'$  and $\wt(\tau') = a$.

\item If $x = \varepsilon$, then $\tau' =
  ((q,\varepsilon),\varepsilon,\gamma,(p,x'),\pi) \in T'$ and
  $\wt(\tau') = 1$.
\end{itemize}
\end{itemize}

Let $w \in \Sigma^*$. First, let $v \in \Delta^*$ with $h(v) =z.w$ for some $z \in K$. We write $v = \delta_1 \ldots \delta_n \in \Delta^*$ with  $n\ge 0$ and
$\delta_i \in \Delta$.  Let $h(\delta_i) = a_i.y_i$ for every
$1 \le i \le n$. Thus $h(v) = \Val(a_1,\ldots,a_n). y_1\ldots y_n$ and $w = y_1 \ldots y_n$.

 Let $\theta = \tau_1 \ldots \tau_m$ be a $q_0$-computation in $\Theta_{\M}(v)$; note that $m \ge \max\{n,1\}$ because at least $\gamma_0$ has to be popped. Let $x_i$ be the second component of $\tau_i$, so, $x_i \in \Delta \cup \{\varepsilon\}$, and $v = x_1 \ldots x_m$. 

Then we construct the  $q_0'$-computation $\theta' = \tau_0'\tau_1'\ldots \tau_m'$ in $\Theta_{\M'}(y_1\ldots y_n)$  as follows:
\begin{itemize}
\item $\tau_0' = (q_0',\varepsilon,\gamma_0,(q_0,x_1),\gamma_0)$.

\item If $1 \le i \le m$ and  $\tau_i = (q,x_i,\gamma,p,\pi)$, then $\tau_i' =
  ((q,x_i),y',\gamma,(p,x_{i+1}),\pi)$ where $y' = y$ if $x_i \in \Delta$ and
  $h(x_i) = a.y$, and $y' = \varepsilon$ if $x_i = \varepsilon$, and $x_{m+1}=\overline{\delta}$.

 \end{itemize}

Note that if $x_i \in \Delta$ and $h(x_i) = a.y$, then $\wt(\tau_i') =
a$, and if $x_i = \varepsilon$, then $\wt(\tau_i') = 1$ for each $1
\le i \le m$, by definition of $\wt$. Consequently
\[
(h(v),w) = \Val(a_1,\ldots,a_n) =
\Val(\wt(\tau_0'),\wt(\tau_1'),\ldots,\wt(\tau_m')) = \wt(\theta').
\]
In particular, $\wt(\theta_1') = (h(v),w) = \wt(\theta_2')$ for every $\theta_1,\theta_2 \in \Theta_{\M}(v)$.

Conversely, for every  $q_0'$-computation $\theta' =
\tau_0'\tau_1'\ldots \tau_m'$ in $\Theta_{\M'}(w)$ by definition of $T'$ there are a uniquely determined $v \in \Delta^*$ and a uniquely determined  $q_0$-computation $\theta = \tau_1 \ldots \tau_m$ in $\Theta_{\M}(v)$ such that $\theta'$ is the computation constructed above. It follows that $\M'$ is a WPDA.
 
So, for every $w \in \Sigma^*$ we obtain 
\begin{align*}
  (h(L(\M)),w) \; & = \; \left( \sum\nolimits_{v \in L(\M)} h(v), \, w\right) \\[3mm]
& = \;  \sum\nolimits_{\substack{v \in L(\M):\\(h(v),w) \not=0}} (h(v), w)\\[3mm]
& =^*   \sum\nolimits_{\substack{v \in L(\M), \theta \in \Theta_{\M}(v):\\(h(v),w) \not=0}} \wt(\theta')\\[2mm]
& = \;  \sum\nolimits_{\theta' \in \Theta_{\M'}(w)} \wt(\theta')\\[2mm]
& = \; (\beh \M'\beh, w)
\end{align*}
where the $*$-marked equality holds because (1) $K$ is complete and
completely idempotent or (2) $\M$ is unambiguous. 
Thus $\beh \M' \beh = h(L(\M))$ and the result follows from Theorem \ref{th:PDA=CF}.
\end{proof}

The following simple observation can be easily proved by considering \cite{chosch63} and  using \cite{barpersha61}, respectively. 

\newpage

\begin{lm}\label{ob:Dyck}
\begin{enumerate}
\item There is an unambiguous CFG $\G$ such that $L(\G) = \D_Y$.
\item Let $\G$ be an unambiguous CFG over $\Sigma$ and $R
  \subseteq \Sigma^*$ is a recognizable language. Then there is an
  unambiguous CFG $\G'$ with $L(\G') = L(\G) \cap R$. 
\end{enumerate}
\end{lm}

\begin{proof} 1. We consider the CFG $\G_Y' =
(N,Y \cup \overline{Y}, Z,P)$ with $N = \{Z,A\}$ and the rules $Z
\rightarrow AZ$, $Z \rightarrow
\varepsilon$, and  $A\rightarrow y Z \overline{y}$. Clearly, $L(\G_Y')
= \D_Y$, and $\G_Y'$ is unambiguous (cf. \cite[Prop.1, p.145]{chosch63}). 

2. Using \cite[Lm.4.1]{barpersha61} we can construct the CFG $\HH_1$
such that $L(\HH_1) = L(\G) \setminus \{\varepsilon\}$ and $\HH_1$ does
not contain a production of the form $A \rightarrow \varepsilon$. By
inspection of the construction we obtain that also $\HH_1$ is unambiguous. 

Now let ${\cal A}$ be a deterministic finite-state string automaton
such that $L({\cal A}) = R$. Then we can apply the construction of
\cite[Thm.8.1]{barpersha61} to $\HH_1$ and ${\cal A}$ and obtain an unambiguous CFG $\HH_2$ with $L(\HH_2) = L(\HH_1) \cap R$. Let $Z$ be the initial nonterminal of $\HH_2$. Finally, from $\HH_2$ we can construct the unambiguous CFG $\G'$ as follows:
if $\varepsilon \in L(\G) \cap R$, then we add the new  initial
nonterminal $Z'$ and the  productions $Z' \rightarrow Z$ and  $Z'
\rightarrow \varepsilon$ to $\HH_2$, otherwise let $\G' = \HH_2$. It
is clear that $\G'$ is unambiguous and $L(\G') = L(\G) \cap R$.
\end{proof}

\eat{
As consequence of  Lemmas  \ref{th:mezwri} and \ref{lm:closure} and of the result of  Chomsky-Sch\"utzenberger for context-free languages  we can now derive the result of  Chomsky-Sch\"utzenberger for quantitative context-free languages.

\

{\em Proof of Theorem \ref{th:char}:} $1 \Leftrightarrow 3$: immediate by Lemmas \ref{th:mezwri} and \ref{lm:closure}. 

$2 \Rightarrow 3$: by Lemma \ref{ob:Dyck}.

$3 \Rightarrow 5$: trivial.
 
$5 \Rightarrow 1$: by Lemma \ref{lm:closure}.

$3 \Rightarrow 2$: By the classical result of Chomsky-Sch\"utzenberger (cf. e.g. \cite[Thm.G1]{koz97})
there are an alphabet $Y$, a recognizable  language $R$ over $Y \cup
\overline{Y}$, and an alphabetic morphism $g: Y \cup \overline{Y}
\rightarrow \Delta \cup \{\varepsilon\}$  such that $L(\G) = g(D_Y
\cap R)$. By analysis of the construction, we have that the set $g^{-1}(v) \cap D_Y \cap R$ is in a one-to-one correspondence with $D_\G(v)$, for every $v \in L(\G)$. Since $\G$ is unambiguous, we have that $|g^{-1}(v) \cap D_Y \cap R| = 1$.

Next we prove that $(h\circ g(v') \mid v' \in \D_Y \cap R)$ is locally
finite in case $K$ is not complete. Since $h(L(\G))$ is defined, the set $I_w = \{v \in L(\G)
\mid (h(v),w) \not= 0\}$ is finite for every $w \in \Sigma^*$. Since
$g^{-1}(v)\cap D_Y \cap R$ is a singleton for every $v \in L(\G)$, the set $\{v' \in \D_Y \cap R \mid (h(g(v')),w)  \not= 0\}$
is finite for every $w \in \Sigma^*$. Hence $(h\circ g(v') \mid v' \in
\D_Y \cap R)$ is locally finite.

Thus $h \circ g: Y \cup \overline{Y} \rightarrow  K\langle\Sigma \cup \{\varepsilon\}\rangle$ is an alphabetic morphism,  $(h \circ g)(D_Y \cap R)$ is well defined, and $s = (h \circ g)(D_Y \cap R)$.

$2 \Rightarrow 4$: Let $\widetilde{Y} = Y \cup \overline{Y}$.  Recall that $h(v) \in K\langle \Sigma^*\rangle$ is a monome for every $v \in \widetilde{Y}^*$. We define $h': \widetilde{Y}^* \rightarrow \Sigma^*$ by letting $h'(v) = w$ if $h(v) = a.w$. Clearly, $h'$ is a morphism. 

Choose a deterministic finite automaton ${\cal A}' = (Q,q_0,F,T)$ over $\widetilde{Y}$ recognizing $R$. We define a deterministic WFA ${\cal A} = (Q,q_0,F,T,\wt)$ over $\widetilde{Y}$ by putting $\wt(t) = a$ if $t = (q,z,p)$ and $h(z) = a.x$. Let $r = \beh {\cal A} \beh \in K\fl \widetilde{Y}^* \fr$. Note that 
$(r,v) = (h(v), w)$  if $v \in R$ and $h'(v) =w$, and 
$(r,v) = 0$ otherwise.

By assumption $s = h(D_Y \cap R) = \sum_{v \in D_Y \cap R} h(v)$. Hence, for $w \in \Sigma^*$:
\[ (s,w) 
= \sum_{v \in D_Y \cap R} (h(v),w)
=  \sum_{v \in D_Y, h'(v)=w} (r,v)
=  (h'(r \cap D_Y),w)
\]
where the sums exist because they have only finitely many nonzero entries if $K$ is not complete.
Thus $s = h'(r \cap D_Y)$.  

$4 \Rightarrow 3$: We put $\widetilde{Y} = Y \cup \overline{Y}$. Also,
let $\widetilde{Y}_0 = \widetilde{Y}  \cup \{\gamma_0\}$ with an element
$\gamma_0 \not\in \widetilde{Y}$. By assumption, there is a deterministic
WFA ${\cal A} = (Q,q_0,F,T,\wt)$ over $\widetilde{Y}$ with $\beh {\cal
  A} \beh = r$. We let ${\cal A}' = (Q,q_0,F,T)$, a deterministic
finite automaton over $\widetilde{Y}$. 


Next, we wish to define a PDA
${\cal M}$ over $\widetilde{Y}$ recognizing $L({\cal A}') \cap
\D_Y$. Let ${\cal M} = (Q,\widetilde{Y}_0,q_0,\gamma_0,F,T')$  such that 
\begin{itemize}
\item $(q,\varepsilon,\gamma_0,q,\varepsilon) \in T'$ for each $q \in F$, and
\item for every $x \in \widetilde{Y}$ and $\gamma \in \widetilde{Y}_0$, $(q,x,\gamma,p,\pi) \in T'$ iff $(q,x,p)
  \in T$ and 
\[
\pi = 
\left\{
\begin{array}{ll}
x\gamma & \text{ if } x \in Y\\
\varepsilon & \text{ if } \gamma \in Y \text{ and } x = \overline{\gamma}.
\end{array}
\right.
\]
\end{itemize}
Since ${\cal A}'$ is deterministic, ${\cal M}$ is an unambiguous
PDA. Moreover, we have $L({\cal M}) = L({\cal A}') \cap \D_Y$. 

Next, we extend ${\cal M}$ to a PDA ${\cal M}_T =
(Q,\widetilde{Y}_0,q_0,\gamma_0,F,\overline{T})$ over $T$ by letting 
\[
\begin{array}{rcl}
(q,t,\gamma,p,\pi) \in \overline{T} & \text{iff} & (q,x,\gamma,p,\pi)
\in T' \text{ and }\\
&& \text{either } t = (q,x,p) \in T \text{ or } t = x = \varepsilon. 
\end{array}
\]
Clearly, ${\cal M}_T$ is unambiguous. Moreover, since ${\cal A}'$ is
deterministic, for each $v \in L({\cal M}) \subseteq L({\cal A}')$
there is a unique successful path $p_v \in T^*$ on $v$ in ${\cal
  A}'$. Then $p_v \in L({\cal M}_T)$. Conversely, each $v' \in L({\cal M}_T)$ arises as $v' = p_v$
for a uniquely determined word $v \in L({\cal M})$ in this way.

We let $\lab: T^* \rightarrow \widetilde{Y}^*$ be the alphabetic
morphism mapping each transition to its label, i.e., $\lab(q,x,p) =
x$. Finally we define an alphabetic morphism $h_K: T \rightarrow
K\langle \Sigma \cup \{\varepsilon\}\rangle$ by letting $h_K(t) = \wt(t).h(\lab(t))$.

We claim that $h_K(L({\cal M}_T)) = h(r \cap \D_Y)$. Let $w \in
\Sigma^*$. Note that if $v \in L({\cal M})$ and $v' = p_v$ as above,
then $\lab(v') = v$ and $h_K(v') = \wt(v').h(v)$. Since $v' = p_v$ is
the unique successful path in ${\cal A}$ on $v$, we obtain $\wt(v') =
(\beh {\cal A}\beh,v)$. Moreover, $(h_K(v'),w) \not= 0$ implies $w =
h(v)$. Also, $(\beh {\cal A}\beh,v) = 0$ if $v \not\in L({\cal
  A}')$. Hence we obtain
\[
\begin{array}{l}
(h_K(L({\cal M}_T)),w) 
= \sum_{v' \in L({\cal M}_T)} (h_K(v'),w)
= \sum_{\substack{v \in L({\cal M})\\h(v)=w}} \wt(v')\\
= \sum_{\substack{v \in L({\cal A}')\cap\D_Y\\h(v)=w}} (\beh {\cal   A}\beh,v) 
= \sum_{\substack{v \in \D_Y\\h(v)=w}} (\beh {\cal   A}\beh,v) 
= \sum_{\substack{v \in \widetilde{Y}^*\\h(v)=w}} (r \cap D_Y,v) \\[5mm]
= (h(r \cap \D_Y), w).
\end{array}
\]
}

As consequence of  Lemmas  \ref{th:mezwri} and \ref{lm:closure} and of the result of  Chomsky-Sch\"utzenberger for context-free languages  we can now derive the result of  Chomsky-Sch\"utzenberger for quantitative context-free languages.

\

{\em Proof of Theorem \ref{th:char}:} $1 \Leftrightarrow 3$: Immediate by Lemmas \ref{th:mezwri} and \ref{lm:closure}. 

$2 \Rightarrow 3$: by Observation \ref{ob:Dyck}.

$3 \Rightarrow 2$: By the classical result of Chomsky-Sch\"utzenberger (cf. e.g. \cite[Thm.G1]{koz97})
there are an alphabet $Y$, a recognizable  language $R$ over $Y \cup
\overline{Y}$, and an alphabetic morphism $g: Y \cup \overline{Y}
\rightarrow \Delta \cup \{\varepsilon\}$  such that $L(\G) = g(D_Y
\cap R)$. By analysis of the construction, we have that the set $g^{-1}(v) \cap D_Y \cap R$ is in a one-to-one correspondence with $D_\G(v)$, for every $v \in L(\G)$. Since $\G$ is unambiguous, we have that $|g^{-1}(v) \cap D_Y \cap R| = 1$.

Next we prove that $(h\circ g(v') \mid v' \in \D_Y \cap R)$ is locally
finite in case $K$ is not complete. Since $h(L(\G))$ is defined, the set $I_w = \{v \in L(\G)
\mid (h(v),w) \not= 0\}$ is finite for every $w \in \Sigma^*$. Since
$g^{-1}(v)\cap D_Y \cap R$ is a singleton for every $v \in L(\G)$, the set $\{v' \in \D_Y \cap R \mid (h(g(v')),w)  \not= 0\}$
is finite for every $w \in \Sigma^*$. Hence $(h\circ g(v') \mid v' \in
\D_Y \cap R)$ is locally finite.

Thus $h \circ g: Y \cup \overline{Y} \rightarrow  K\langle\Sigma \cup \{\varepsilon\}\rangle$ is an alphabetic morphism,  $(h \circ g)(D_Y \cap R)$ is well defined, and $s = (h \circ g)(D_Y \cap R)$.

$2 \Rightarrow 4$: Let $\widetilde{Y} = Y \cup \overline{Y}$.  Recall that $h(v) \in K\langle \Sigma^*\rangle$ is a monome for every $v \in \widetilde{Y}^*$. We define $h': \widetilde{Y}^* \rightarrow \Sigma^*$ by letting $h'(v) = w$ if $h(v) = a.w$. Clearly, $h'$ is a morphism. 

Choose a deterministic finite automaton ${\cal A}' = (Q,q_0,F,T)$ over $\widetilde{Y}$ recognizing $R$. We define a deterministic WFA ${\cal A} = (Q,q_0,F,T,\wt)$ over $\widetilde{Y}$ by putting $\wt(t) = a$ if $t = (q,z,p)$ and $h(z) = a.x$. Let $r = \beh {\cal A} \beh \in K\fl \widetilde{Y}^* \fr$. Note that 
$(r,v) = (h(v), w)$  if $v \in R$ and $h'(v) =w$, and 
$(r,v) = 0$ otherwise.

By assumption $s = h(D_Y \cap R) = \sum_{v \in D_Y \cap R} h(v)$. Hence, for $w \in \Sigma^*$:
\[ (s,w) 
= \sum_{v \in D_Y \cap R} (h(v),w)
=  \sum_{v \in D_Y, h'(v)=w} (r,v)
=  (h'(r \cap D_Y),w)
\]
where the sums exist because they have only finitely many nonzero entries if $K$ is not complete.
Thus $s = h'(r \cap D_Y)$.  

$4 \Rightarrow 3$: We put $\widetilde{Y} = Y \cup \overline{Y}$. Also,
let $\widetilde{Y}_0 = \widetilde{Y}  \cup \{\gamma_0\}$ with an element
$\gamma_0 \not\in \widetilde{Y}$. By assumption, there is a deterministic
WFA ${\cal A} = (Q,q_0,F,T,\wt)$ over $\widetilde{Y}$ with $\beh {\cal
  A} \beh = r$. We let ${\cal A}' = (Q,q_0,F,T)$, a deterministic
finite automaton over $\widetilde{Y}$. 


Next, we wish to define a PDA
${\cal M}$ over $\widetilde{Y}$ recognizing $L({\cal A}') \cap
\D_Y$. Let ${\cal M} = (Q,\widetilde{Y}_0,q_0,\gamma_0,F,T')$  such that 
\begin{itemize}
\item $(q,\varepsilon,\gamma_0,q,\varepsilon) \in T'$ for each $q \in F$, and
\item for every $x \in \widetilde{Y}$ and $\gamma \in \widetilde{Y}_0$, $(q,x,\gamma,p,\pi) \in T'$ iff $(q,x,p)
  \in T$ and 
\[
\pi = 
\left\{
\begin{array}{ll}
x\gamma & \text{ if } x \in Y\\
\varepsilon & \text{ if } \gamma \in Y \text{ and } x = \overline{\gamma}.
\end{array}
\right.
\]
\end{itemize}
Since ${\cal A}'$ is deterministic, ${\cal M}$ is an unambiguous
PDA. Moreover, we have $L({\cal M}) = L({\cal A}') \cap \D_Y$. 

Next, we extend ${\cal M}$ to a PDA ${\cal M}_T =
(Q,\widetilde{Y}_0,q_0,\gamma_0,F,\overline{T})$ over $T$ by letting 
\[
\begin{array}{rcl}
(q,t,\gamma,p,\pi) \in \overline{T} & \text{iff} & (q,x,\gamma,p,\pi)
\in T' \text{ and }\\
&& \text{either } t = (q,x,p) \in T \text{ or } t = x = \varepsilon. 
\end{array}
\]
Clearly, ${\cal M}_T$ is unambiguous. Moreover, since ${\cal A}'$ is
deterministic, for each $v \in L({\cal M}) \subseteq L({\cal A}')$
there is a unique successful path $p_v \in T^*$ on $v$ in ${\cal
  A}'$. Then $p_v \in L({\cal M}_T)$. Conversely, each $v' \in L({\cal M}_T)$ arises as $v' = p_v$
for a uniquely determined word $v \in L({\cal M})$ in this way.

We let $\lab: T^* \rightarrow \widetilde{Y}^*$ be the alphabetic
morphism mapping each transition to its label, i.e., $\lab(q,x,p) =
x$. Finally we define an alphabetic morphism $h_K: T \rightarrow
K\langle \Sigma \cup \{\varepsilon\}\rangle$ by letting $h_K(t) = \wt(t).h(\lab(t))$.

We claim that $h_K(L({\cal M}_T)) = h(r \cap \D_Y)$. Let $w \in
\Sigma^*$. Note that if $v \in L({\cal M})$ and $v' = p_v$ as above,
then $\lab(v') = v$ and $h_K(v') = \wt(v').h(v)$. Since $v' = p_v$ is
the unique successful path in ${\cal A}$ on $v$, we obtain $\wt(v') =
(\beh {\cal A}\beh,v)$. Moreover, $(h_K(v'),w) \not= 0$ implies $w =
h(v)$. Also, $(\beh {\cal A}\beh,v) = 0$ if $v \not\in L({\cal
  A}')$. Hence we obtain
\[
\begin{array}{l}
(h_K(L({\cal M}_T)),w) 
= \sum_{v' \in L({\cal M}_T)} (h_K(v'),w)
= \sum_{\substack{v \in L({\cal M})\\h(v)=w}} \wt(v')\\
= \sum_{\substack{v \in L({\cal A}')\cap\D_Y\\h(v)=w}} (\beh {\cal   A}\beh,v) 
= \sum_{\substack{v \in \D_Y\\h(v)=w}} (\beh {\cal   A}\beh,v) 
= \sum_{\substack{v \in \widetilde{Y}^*\\h(v)=w}} (r \cap D_Y,v) \\[5mm]
= (h(r \cap \D_Y), w).
\end{array}
\]

\

$3 \Rightarrow 5$: trivial.

$5 \Rightarrow 1$: by Lemma \ref{lm:closure}.
\hfill $\Box$


\section{Context-Free Step Functions}
\label{sec:step-functions}

An important result in the theory of rational power series
states sufficient conditions when for a recognizable series,
the language of all words assuming a given value is recognizable.
Of particular interest are semirings in which both operations
are restricted to be locally finite \cite{berreu88},
cf. also \cite{drostuvog10}. Here, as a supplement of the previous results,
we investigate this question for quantitative context-free languages over unital valuation
monoids $K$. This can also be seen as another way (in comparison
to Theorem \ref{th:char}) of connecting series with languages and weights.

If $L\subseteq\Sigma^*$ and $a\in K$, we let $a\cdot\one_L\in K\fl\Sigma^*\fr$ be the series satisfying $(a\cdot\one_L,w)=a$ if $w\in L$, and $(a\cdot\one_L,w)=0$ otherwise. 
Let $s\in K\fl\Sigma^*\fr$. We say that $s$ is a \emph{context-free step function} if
$s = \sum_{i=1}^n a_i \cdot \one_{L_i}$ for some $n \in \nat$,
$a_1,\ldots,a_n \in K$, and context-free languages $L_1,\ldots,L_n
\subseteq \Sigma^*$. The languages $L_i$ are called {\em step
  languages}. Moreover, a context-free step function is a {\em
  recognizable step function} if each $L_i$ is recognizable.

As is well known, a recognizable step function over any semiring is a
recognizable series \cite{eil74}. This even holds for strong bimonoids with the same proof \cite{drostuvog10}, and could be extended to unital valuation monoids. In contrast, this implication fails for context-free languages and quantitative context-free languages.

\begin{lm}\label{ex:inh-amb}  Let $L$ be an inherently ambiguous context-free
  language. Then $\one_L~\not\in~\CF(\Sigma,\mathbb{N})$. 
\end{lm}
\begin{proof} Assume that $\one_L \in \CF(\Sigma,\mathbb{N})$. Then let $\G$ be a WCFG such that   $\one_L = \beh \G \beh$. Then, for every word $w \in L$, we have   that $|D_\G(w)| =1$, because otherwise the weights of different
  derivations of $w$ would sum up to a value greater than 1. But then
  the CFG which is underlying $\G$ is an unambiguous CFG for $L$,
  which contradicts our assumption on $L$. 
\end{proof}    

However under suitable restrictions we obtain the  following positive result.

\begin{lm}\label{lm:1=cfstep} Let $L$ be a context-free language over
  $\Sigma$  and $a \in K$. If $L$   can be generated by some
  unambiguous CFG or if $K$ is complete and completely idempotent, then $a \cdot \one_L \in \CF(\Sigma,K)$.
\end{lm}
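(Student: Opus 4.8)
The plan is to reduce the statement $a \cdot \one_L \in \CF(\Sigma,K)$ to the closure result already established in Lemma \ref{lm:closure}. The idea is that $a \cdot \one_L$ is the image of $L$ under a very simple alphabetic morphism, namely one that carries the weight $a$ on a single, distinguished letter and leaves all remaining letters unweighted. I would set up a fresh alphabet $\Delta$ that is a disjoint copy of $\Sigma$ together with one new marker symbol, reserved to receive the scalar weight $a$.

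Concretely, I would define $\Delta = \Sigma \cup \{\#\}$ with $\# \notin \Sigma$, and an alphabetic morphism $h: \Delta \rightarrow K\langle \Sigma \cup \{\varepsilon\}\rangle$ by $h(\sigma) = 1.\sigma$ for each $\sigma \in \Sigma$ and $h(\#) = a.\varepsilon$. I would then take the context-free language $L' = \#\cdot L = \{\# w \mid w \in L\} \subseteq \Delta^*$, which is clearly context-free (prefixing a fixed letter preserves context-freeness, and preserves unambiguous generability as well, by prefixing $\#$ to the start rule). Each word $\# w \in L'$ maps under the induced morphism to $\Val(a, 1, \ldots, 1).w = a.w$, using clause~4 in the definition of a unital valuation monoid to discard the $1$'s contributed by the letters of $w$, and the fact that $\Val(a) = a$. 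Thus $h(\# w) = a.w$ for each $w \in L$, and since the map $w \mapsto \# w$ is a bijection between $L$ and $L'$, we obtain $h(L') = \sum_{w \in L} a.w = a \cdot \one_L$, provided the relevant sum is defined.

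The remaining obligation is precisely the local-finiteness (or completeness) hypothesis demanded by Lemma \ref{lm:closure}. I would verify that $(h(v) \mid v \in L')$ is locally finite whenever $K$ is not complete: for a fixed $w \in \Sigma^*$, the only $v \in L'$ with $(h(v),w) \neq 0$ is $v = \# w$ (when $a \neq 0$), so the index set is a singleton, hence finite. The hypothesis of the lemma offers two mutually exclusive cases, and I would treat them in parallel: if $L$ is generated by an unambiguous CFG, then $L' = \# L$ is too, so the first alternative of Lemma \ref{lm:closure} applies; otherwise $K$ is complete and completely idempotent, and the second alternative applies directly. Invoking Lemma \ref{lm:closure} then yields $a \cdot \one_L = h(L') \in \CF(\Sigma,K)$.

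I do not anticipate a genuine obstacle here, since the statement is essentially a specialization of the already-proved closure lemma. The only point requiring a little care is the bookkeeping in the unital-valuation-monoid axioms: one must confirm that $\Val$ applied to the sequence $(a,1,\ldots,1)$ collapses to $a$, which follows by repeated application of axiom~4 together with $\Val(a) = a$ from axiom~2. A secondary subtlety is ensuring that prefixing the marker $\#$ genuinely preserves unambiguous context-freeness in the first case; this is routine, as one simply replaces the initial nonterminal $Z$ of an unambiguous grammar for $L$ by a new initial nonterminal $Z'$ with the single rule $Z' \rightarrow \# Z$, which introduces no new ambiguity.
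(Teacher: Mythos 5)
Your proposal is correct and follows essentially the same route as the paper's own proof: introduce a fresh marker $\#$, set $L'=\#L$, define $h(\#)=a.\varepsilon$ and $h(\sigma)=1.\sigma$, and invoke Lemma~\ref{lm:closure}. The extra details you supply (the collapse of $\Val(a,1,\ldots,1)$ to $a$ via axiom~4, the local-finiteness check, and the preservation of unambiguity under prefixing) are exactly the points the paper leaves implicit.
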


\begin{proof} We choose a new symbol $\# \not\in \Sigma$ and define
  the context-free language $L' = \{\#w \mid w \in L\}$ over $\Sigma'
  = \Sigma \cup \{\#\}$. Moreover, we define the alphabetic morphism
  $h: \Sigma' \rightarrow K\langle \Sigma \cup \{\varepsilon\}\rangle$
  by $h(\#) = a.\varepsilon$ and $h(\sigma) = 1.\sigma$ for every
  $\sigma \in \Sigma$. 

Clearly, if $L$ can be generated by some unambiguous CFG, then this
also holds for $L'$. Moreover, $(h(v) \mid v\in L')$ is locally finite
in case $K$ is not complete.  It follows that $a\cdot \one_L = h(L')$.  By Lemma
\ref{lm:closure} we obtain that $h(L') \in \CF(\Sigma,K)$.
\end{proof}

We call a context-free step function a {\em context-free step function
  with unambiguous step languages} if each of its step languages can be
generated by an unambiguous CFG. Then we obtain the following result by Lemmas \ref{lm:1=cfstep} and  \ref{lm:WPDA-sum} and Theorem~\ref{th:PDA=CF}.

\begin{cor}\label{cor:CF-QCF} Let $s \in K\fl \Sigma^*\fr$ be a context-free step
  function. Let $s$ have unambiguous step languages or let $K$ be
  complete and completely idempotent. Then $s \in \CF(\Sigma,K)$.
\end{cor}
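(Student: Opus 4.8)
The goal is to prove Corollary~\ref{cor:CF-QCF}, which states that a context-free step function $s = \sum_{i=1}^n a_i \cdot \one_{L_i}$ belongs to $\CF(\Sigma,K)$ provided either each step language $L_i$ is generated by an unambiguous CFG, or $K$ is complete and completely idempotent. The plan is to reduce the statement directly to the two results already established: Lemma~\ref{lm:1=cfstep}, which handles a single summand $a_i \cdot \one_{L_i}$, and Lemma~\ref{lm:WPDA-sum} together with Theorem~\ref{th:PDA=CF}, which provides closure of the class under finite sums.

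First I would treat each summand separately. Under either hypothesis, Lemma~\ref{lm:1=cfstep} applies to the pair $(L_i, a_i)$: in the unambiguous case each $L_i$ is generated by an unambiguous CFG, and in the complete completely idempotent case the hypothesis on $K$ is exactly what Lemma~\ref{lm:1=cfstep} requires. Either way we conclude $a_i \cdot \one_{L_i} \in \CF(\Sigma,K)$ for every $1 \le i \le n$. Next I would assemble the finite sum. By Theorem~\ref{th:PDA=CF} we have $\CF(\Sigma,K) = \PDA(\Sigma,K)$, so each $a_i \cdot \one_{L_i}$ lies in $\PDA(\Sigma,K)$; applying Lemma~\ref{lm:WPDA-sum} inductively on $n$ shows that the finite sum $\sum_{i=1}^n a_i \cdot \one_{L_i}$ again lies in $\PDA(\Sigma,K)$, and hence in $\CF(\Sigma,K)$ by Theorem~\ref{th:PDA=CF} once more. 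This yields $s \in \CF(\Sigma,K)$ as desired.

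The main point requiring care is the justification that the finite sum of series is well defined in the relevant sense, so that it genuinely equals the series computed by the combined automaton. When $K$ is not complete, one must check that the family $(a_i \cdot \one_{L_i} \mid 1 \le i \le n)$ is locally finite; but since the index set is finite this is automatic, so the sum $\sum_{i=1}^n a_i \cdot \one_{L_i}$ exists unconditionally. When $K$ is complete the sum is defined regardless. Thus no genuine obstacle arises here, and the only subtlety is bookkeeping: ensuring the inductive application of Lemma~\ref{lm:WPDA-sum} stays inside the WPDA framework at each step. Since $n$ is a fixed finite number, the induction terminates after $n$ steps and the conclusion follows.

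In short, the corollary is a direct consequence of combining the single-summand result (Lemma~\ref{lm:1=cfstep}), the sum-closure of the automaton model (Lemma~\ref{lm:WPDA-sum}), and the equivalence of the grammar and automaton classes (Theorem~\ref{th:PDA=CF}); I do not expect any step to present real difficulty beyond verifying that the finiteness of the index set makes the sum well defined in the non-complete case.
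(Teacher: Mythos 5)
Your proof is correct and follows exactly the route the paper takes: Lemma~\ref{lm:1=cfstep} for each summand $a_i\cdot\one_{L_i}$, then Theorem~\ref{th:PDA=CF} together with Lemma~\ref{lm:WPDA-sum} (applied inductively) to close under the finite sum. Your added remark that the finite index set makes the sum automatically well defined when $K$ is not complete is a correct and harmless elaboration of what the paper leaves implicit.
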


A context-free step function $s$ is called \emph{strong} if it can be expressed as $\sum_{i=1}^n a_i \cdot \one_{L_i}$
 where the family $(L_i  \mid 1 \le i \le n)$ forms a partition of $\Sigma^*$, i.e.,  $L_i \cap L_j = \emptyset$ for every $i \not= j$, and $\Sigma^* = \bigcup_{i=1}^n L_i$. Clearly,  $s$
is a strong context-free step function iff  $\im(s)$ is finite and $s^{-1}(a)$ is a context-free language for every $a \in \im(s)$, where $\im(s) = \{(s,w)\mid w \in \Sigma^*\}$ is the \emph{image} of $s$.

Due to the closure properties of the class of recognizable languages,
we can transform every recognizable step function over $\Sigma^*$ into an equivalent
one for which the collection of its step languages partitions
$\Sigma^*$. This is different for context-free step functions.

\begin{lm}\rm There are quantitative context-free languages which are context-free step
functions  but  not strong context-free step functions.
\end{lm}
\begin{proof} For this
consider, e.g., the context-free step function $s = 1\cdot \one_{L_1}
+ 2 \cdot \one_{L_2}$ over the semiring of natural numbers with $L_1 =
\{a^nb^nc^k \mid n,k \in \nat\}$ and $L_2 = \{a^kb^nc^n \mid n,k \in
\nat\}$. (Recall from Example \ref{ex:val} that we can view any semiring as a particular unital valuation monoid.). Since $L_1$ and
$L_2$ can be generated by unambiguous CFG, we obtain from Corollary
\ref{cor:CF-QCF} that $s$ is a quantitative context-free languages. But $3 \in \im(s)$ and  $s^{-1}(3) = \{a^nb^nc^n \mid n \in \nat\}$ which is not context-free. 
\end{proof}

In the rest of this section we wish to derive a converse of Corollary \ref{cor:CF-QCF}. However, even if the unital valuation monoid $K$ is finite, complete, and completely idempotent, then the converse in general does not hold, as shown by the following example.

\begin{ex} Let  $\Sigma = \{\sigma\}$, and choose a non recursively enumerable set $L \subseteq \Sigma^*$ with $\sigma \not\in L$. We define $K$ with $|K| =3$, say, $K = \{0,1,a\}$, as additively idempotent unital valuation monoid such that 
$\Val(a^n) = 1$ if $\sigma^n \in L$, and $\Val(a^n) = a$ if $\sigma^n \not\in L$, for each $n \ge 2$. There are no other restrictions on $+$ or $\Val$.

Let ${\cal A} = (Q,q_0,F,T,\wt)$ over $\Sigma$ with $Q = F = \{q_0\}$, $T = \{(q_0,\sigma,q_0)\}$, and $\wt((q_0,\sigma,q_0)) =a$. Clearly ${\cal A}$ is a deterministic WFA and 
$(\beh {\cal A} \beh,w) = 
\Val(a^{|w|}) = 
1$  if $w \in L$  or $w = \varepsilon$, and $(\beh {\cal A} \beh,w) =  a$ otherwise. 
So $\beh {\cal A}\beh$ takes on two values but $\beh {\cal A}\beh^{-1}(1)= L \cup \{\varepsilon\}$ which is not recursively enumerable.
\end{ex}

The example shows that we have to impose additional assumptions on the valuation function $\Val$. We call a unital valuation monoid $K$ {\em sequential} if 
\[\Val(a_1,\ldots,a_{n+1}) = \Val(\Val(a_1,\ldots,a_n),a_{n+1})\]
for every $n \ge 1$ and  $a_1,\ldots,a_{n+1} \in K$. 
Given a sequential unital valuation monoid $K$, we can define a multiplication $\cdot$ on $K$ by letting $a \cdot b = \Val(a,b)$ for $a,b \in K$. Clearly, then $(K,+,\cdot,0,1)$ is a unital monoid-magma (compare Example \ref{ex:val}).
Hence sequential unital valuation monoids are precisely the valuation monoids arising from unital monoid-magmas.

We say that $K$ is \emph{locally finite}, if whenever $F\subseteq K$ is a finite subset, then the set $\Val(F^*)$ comprising the valuations of all finite sequences of elements of $F$ is finite.

\begin{ex} \sloppy We let  $\mathbb{R}^t$ (truncated reals) denote the set of all
real numbers of a given bounded precision. We consider the unital valuation monoid $K=(\mathbb{R}^t \cup \{-\infty,\infty\}, \sup,\Val, -\infty,\infty)$ where  $\Val$ arises, as described above,  from the operation $\cdot$ defined as binary average followed by truncation to a number in  $\mathbb{R}^t$. Then $K$ is infinite, completely idempotent, and locally finite. Also, $\cdot$ is not associative, since if $\mathbb{R}^t = \mathbb{Z}$ we have $(1 \cdot 5) \cdot 9 = 6$  but $1 \cdot (5 \cdot 9) = 4$.
This describes a situation where we may have successive average computations only of sufficient bounded precision. 

For  examples of locally finite strong bimonoids we refer the reader to \cite{drovog12}.
\end{ex}


Now we show:

\begin{theo}\label{lm:WPDA-cfstep} Let $K$ be an additively
  idempotent, locally finite, sequential unital valuation monoid. Assume
  that $K$ is completely idempotent in case that $K$ is complete. Then
  each series $s \in \CF(\Sigma,K)$ is a context-free step function.
\end{theo}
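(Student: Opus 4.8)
The plan is to show that $s=\beh\G\beh$ for some WCFG $\G$ takes only finitely many values and that each value is attained on a context-free language, which by the observation characterizing strong step functions (and by a partition argument) yields a context-free step function. First I would invoke Lemma~\ref{lm:WCFG:normal-form} to assume $\G=(N,\Sigma,Z,P,\wt)$ is in head normal form, and let $F=\wt(P)\subseteq K$ be the finite set of weights actually appearing on productions. Since every derivation weight $\wt(d)=\Val(\wt(\rho_1),\ldots,\wt(\rho_n))$ is a valuation of a finite sequence drawn from $F$, local finiteness of $K$ gives that $\{\wt(d)\mid d\in D(w),\,w\in\Sigma^*\}\subseteq\Val(F^*)$ is finite. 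Call this finite set $V=\{b_1,\ldots,b_m\}$.

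The core of the argument is to understand $(\beh\G\beh,w)=\sum_{d\in D(w)}\wt(d)$. Grouping derivations by their weight, this sum equals $\sum_{j}\bigl(\sum_{d\in D(w):\wt(d)=b_j}b_j\bigr)$. Here I would use additive idempotence: a sum of copies of a single element $b_j$ collapses to $b_j$ (and complete idempotence handles the infinite case when $K$ is complete, which is exactly why that hypothesis is assumed). So each inner sum is either $0$ (if no derivation of $w$ has weight $b_j$) or $b_j$, and hence $(\beh\G\beh,w)=\sum_{j\in J_w}b_j$ where $J_w=\{j\mid \exists d\in D(w):\wt(d)=b_j\}$. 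Since there are only $m$ indices, the subsets $J_w$ range over a finite collection, and summing the corresponding $b_j$'s (in the commutative monoid $(K,+,0)$) yields only finitely many distinct values; thus $\im(s)$ is finite.

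It remains to show each preimage $s^{-1}(a)$ is context-free. The key technical step is that for each fixed target weight $b_j\in V$, the set $L_j=\{w\in\Sigma^*\mid \exists d\in D(w):\wt(d)=b_j\}$ is context-free. To see this I would build, for each $b_j$, a CFG that tracks the running valuation of the partial derivation in its nonterminals: since $\Val(F^*)$ is finite, a finite set of ``partial-value'' annotations suffices, and one can design productions so that a word is generated exactly when it admits a derivation whose total weight is $b_j$. Concretely, augment each nonterminal with an element of $\Val(F^*)$ recording the valuation of the subderivation rooted there and restrict productions to respect the valuation's recursive structure; finiteness of $\Val(F^*)$ keeps the nonterminal set finite, so $L_j$ is context-free. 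Then for any value $a\in\im(s)$, $s^{-1}(a)=\bigcup\{\bigcap_{j\in J}L_j\setminus\bigcup_{j\notin J}L_j : \sum_{j\in J}b_j=a\}$; since the context-free languages are closed under the relevant finite Boolean operations needed here only in combination that preserves context-freeness, one instead writes $s$ directly as $\sum_a a\cdot\one_{s^{-1}(a)}$ after verifying each $s^{-1}(a)$ is context-free via these $L_j$'s.

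I expect the main obstacle to be exactly this last construction of the CFG recognizing $L_j$ while faithfully tracking valuations through the tree structure of derivations: unlike a semiring product, the valuation $\Val$ is a global operation on the whole multiset/sequence of production weights, so it is not a priori clear it can be computed by a finite bottom-up annotation. The hypothesis of local finiteness is what rescues this — it guarantees the space of achievable partial valuations is finite — but one must carefully formalize how a nonterminal's annotation combines with those of its children under $\Val$, using axioms 3 and 4 of the valuation monoid (the role of $0$ and the neutrality of $1$) together with the head-normal-form structure. Intersecting with the unambiguity/partition bookkeeping needed to express $s^{-1}(a)$ as a genuine context-free language (rather than an arbitrary Boolean combination) is the delicate point.
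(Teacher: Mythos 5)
Your overall skeleton (local finiteness gives a finite set $\Val(F^*)$ of possible derivation weights; grouping derivations by weight and collapsing each group by additive resp.\ complete idempotence) is exactly the paper's first half. But the step you yourself single out as ``the main obstacle'' is a genuine gap, and local finiteness does \emph{not} rescue it in the way you claim. You propose a CFG whose nonterminals carry ``the valuation of the subderivation rooted there'' and whose productions ``respect the valuation's recursive structure.'' For a general unital valuation monoid there is no such recursive structure: $\wt(d)=\Val(\wt(\rho_1),\ldots,\wt(\rho_n))$ is a function of the entire flat sequence, and it is \emph{not} determined by the root production's weight together with the valuations of the child subderivations. Already for $\mathrm{avg}$, the average of a concatenation of two blocks is not a function of the two block averages (one would at least need their lengths, which are unbounded). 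Local finiteness only bounds the cardinality of $\Val(F^*)$; it supplies no combining rule, so the bottom-up annotation is not well defined and the construction of a CFG for $L_j$ collapses. The paper avoids trees altogether: it passes to a WPDA via Theorem~\ref{th:PDA=CF}, where a computation is a \emph{linear} sequence $\tau_1\ldots\tau_m$, and builds $\M_a$ with state set $Q\times\Val(\wt(T)^*)$ in which the second component records the valuation $\wt(\tau_1\ldots\tau_i)$ of the actual prefix of the whole computation, the new transitions being those whose annotations are witnessed by a genuine prefix computation of $\M$. Tracking prefixes of the global sequence is something a state (or a left-to-right automaton) can do and a tree-shaped nonterminal annotation cannot.

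The second problem is your detour through $s^{-1}(a)$. The expression $s^{-1}(a)=\bigcup\{\bigcap_{j\in J}L_j\setminus\bigcup_{j\notin J}L_j:\sum_{j\in J}b_j=a\}$ uses intersections and complements of context-free languages, which are not context-free in general, and your closing sentence does not actually repair this. It is also unnecessary: the definition of a context-free step function does not require the step languages to be disjoint or to be preimages. Once you have $(s,w)=\sum_{j\in J_w}b_j$, you already have $s=\sum_{j=1}^m b_j\cdot\one_{L_j}$ with $L_j=\{w\mid\exists d\in D(w):\wt(d)=b_j\}$, which is all the theorem asks for, provided each $L_j$ is context-free. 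The paper makes exactly this move and explicitly remarks afterwards that the resulting step function is in general \emph{not} strong because the $L_a$ need not be disjoint. So you should drop the strong/partition bookkeeping entirely and concentrate on establishing context-freeness of the $L_j$ by the linear (automaton-based) weight tracking rather than a tree-based one.
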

\begin{proof} Due to Theorem \ref{th:PDA=CF}, there is a WPDA $\M = (Q,\Gamma,q_0,\gamma_0,F,T,\wt)$ over $\Sigma$ and $K$ such that $\beh \M \beh = s$. Clearly, $\wt(T)$ is a finite subset of $K$. Let $Y=\Val(\wt(T)^*)$. Then $Y$ is  finite. Also note that $\wt(\theta) \in Y$ for every computation $\theta$ of $\M$.

Let $w \in \Sigma^*$. Then
\[
(\beh \M\beh,w) = \sum_{\theta \in \Theta(w)} \wt(\theta) = \sum_{a \in Y}  \sum_{\substack{\theta \in \Theta(w)\\\wt(\theta) = a}} \wt(\theta) =^{*} \sum_{a \in Y} a \cdot \one_{L_a}(w)
\]
where $L_a = \{v \in \Sigma^* \mid \text{there is a $\theta \in
  \Theta(v)$ with $\wt(\theta)=a$}\}$. The equation marked by $*$
holds because $K$ is  additively idempotent or  completely idempotent in case $K$ is complete. It remains to prove that $L_a$ is context-free for every $a \in Y$.

We construct the PDA $\M_a = (Q',\Gamma,q_0',\gamma_0,F',T')$ where $Q' = Q \times Y$, $q_0' = (q_0,1)$, and $F' = F \times \{a\}$. Moreover, we let 
$T'$ contain $((q,y),x,\gamma,(p,y'),\pi)$ iff $\tau = (q,x,\gamma,p,\pi) \in T$ and $y'=\Val(y,\wt(\tau))$.
 Then $q_0'$-computations of $\M_a$ correspond to $q_0$-computations $\theta$ of $\M$ with $\wt(\theta)=a$, and conversely. It follows that $L(\M_a)= L_a$, hence $L_a$ is a context-free language.
\end{proof}

Note that  in the previous theorem, in general $s$ is not a strong context-free step function, because $L_a$ and $L_{a'}$ need not be disjoint.

A lattice is called {\em complete} if any subset has a supremum and
infimum. Clearly, every complete lattice is a completely
idempotent and  locally finite sequential unital valuation monoid. So, as a consequence of
Corollary \ref{cor:CF-QCF} and Theorem \ref{lm:WPDA-cfstep}, we obtain immediately the following.

\begin{cor}\label{cor:bounded-lattices} Let $K$ be a complete lattice and $s \in  K\fl\Sigma^*\fr$. Then $s \in \CF(\Sigma,K)$ if and only if $s$ is a context-free step function.
\end{cor}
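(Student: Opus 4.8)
The plan is to obtain both directions of the biconditional as immediate consequences of Corollary~\ref{cor:CF-QCF} and Theorem~\ref{lm:WPDA-cfstep}, once I have checked that a complete bounded lattice $K=(L,\lor,\land,0,1)$ satisfies the hypotheses of both results. First I would record the algebraic properties inherited by $K$. As already observed in Example~\ref{ex:val}(4), every bounded lattice is a strong bimonoid and hence, via the induced iterated meet, a unital valuation monoid. The join is idempotent, so $K$ is additively idempotent; and completeness of the lattice furnishes the infinitary sums of the underlying monoid as arbitrary suprema, compatibly with the binary $\lor$, whence $K$ is complete. Since $\sup_I a = a$ for any constant family indexed by a set $I$, the monoid is moreover completely idempotent.

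Next I would confirm local finiteness. In a bounded lattice the submonoid of $(K,\land,1)$ generated by a finite set $F$ consists of finite meets of elements of $F$ and is therefore finite, so $K$ is multiplicatively locally finite; by the equivalence recorded in the discussion of local finiteness above (a strong bimonoid is multiplicatively locally finite if and only if its associated valuation monoid is locally finite), $K$ is a locally finite unital valuation monoid. Thus $K$ is simultaneously additively idempotent, complete, completely idempotent, and locally finite, as also summarised in the paragraph preceding the corollary.

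With these facts in place the two implications are routine. For the implication from $s\in\CF(\Sigma,K)$ to $s$ being a context-free step function, I would apply Theorem~\ref{lm:WPDA-cfstep}, whose hypotheses are precisely additive idempotency, local finiteness, and complete idempotency in the complete case, all verified above. For the converse, given a context-free step function $s$, I would invoke Corollary~\ref{cor:CF-QCF} in its second alternative (namely $K$ complete and completely idempotent) to conclude $s\in\CF(\Sigma,K)$.

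I expect no genuine obstacle here: the entire content is the verification that complete bounded lattices carry the four algebraic properties, each of which is a standard lattice fact. The only points meriting a moment's attention are that the lattice supremum genuinely realises the complete-monoid sum in a manner consistent with $\lor$, and that multiplicative local finiteness of the strong bimonoid transfers to local finiteness of the associated valuation monoid; both follow from the definitions and the stated equivalence, so the proof amounts to a bookkeeping assembly of earlier results.
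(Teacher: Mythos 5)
Your proposal is correct and follows the same route as the paper: the paper's own justification is exactly the verification, in the paragraph preceding the corollary, that a complete bounded lattice is a completely idempotent, additively idempotent, (multiplicatively) locally finite unital valuation monoid, followed by the immediate application of Theorem~\ref{lm:WPDA-cfstep} for one direction and Corollary~\ref{cor:CF-QCF} for the other. Your extra care about the supremum realising the complete-monoid sum and the transfer of multiplicative local finiteness is exactly the right bookkeeping and matches the paper's intent.
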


\section{Conclusion and Open Problems}

We could show that a fundamental result of formal language theory, the Chomsky-Sch\"utzenberger theorem, holds not only in semiring-weighted settings, but even for much more general computation models, the unital valuation monoids, which include, e.g., computations of averages of real numbers. 
We can represent the quantitative languages by weighted context-free grammars and by weighted pushdown automata; both formalisms were shown to be expressively equivalent. 
Finally, we considered context-free step functions.

In \cite{drostuvog10} it was proved that every recognizable series over any additively locally finite and multiplicatively locally finite strong bimonoid is a recognizable step function. In the light of this result, we wonder whether it is possible to extend our Theorem \ref{lm:WPDA-cfstep} to additively locally finite strong bimonoids (while keeping the restriction on the multiplication operation).

Recently, a (unweighted) Chomsky-Sch\"utzenberger Theorem has been proved in which the  involved morphism is non-erasing \cite{okh12}. Can this be generalized to our weighted setting?


\end{document}